\newtheorem{theorem}{Theorem}
\newtheorem{proof}{Proof}
\begin{document}
%
\title{A UAV-Enabled Time-Sensitive Data Collection Scheme for Grassland Monitoring Edge Networks}
%
%
%

\author{\IEEEauthorblockN{Dongbin~Jiao\IEEEauthorrefmark{1}\IEEEauthorrefmark{2},~\IEEEmembership{Member,~IEEE,}
Zihao~Wang\IEEEauthorrefmark{1}, Wen~Fan\IEEEauthorrefmark{1}, Weibo~Yang\IEEEauthorrefmark{5}, Peng~Yang\IEEEauthorrefmark{3}\IEEEauthorrefmark{4}, ~\IEEEmembership{Senior Member,~IEEE}, 
Zhanhuan Shang\IEEEauthorrefmark{6}, and Shi Yan\IEEEauthorrefmark{1},~\IEEEmembership{Member,~IEEE}}
\thanks{\IEEEauthorblockA{\IEEEauthorrefmark{1}School of Information Science and Engineering, Lanzhou University, Lanzhou, 730000, P. R. China (e-mail: \{jiaodb, wangzihao2020, fanw20, yanshi\}@lzu.edu.cn).}}
\thanks{\IEEEauthorblockA{\IEEEauthorrefmark{2}Key Laboratory of Tourism Information Fusion Processing and Data Ownership Protection, Ministry of Culture and Tourism, Lanzhou University, Lanzhou, 730000, P. R. China.}}
\thanks{\IEEEauthorblockA{\IEEEauthorrefmark{3}Guangdong Provincial Key Laboratory of Brain-inspired Intelligent Computation, Department of Computer Science and Engineering, Southern University of Science and Technology, Shenzhen, 518055, P. R. China (e-mail: yangp@sustech.edu.cn).}}
\thanks{\IEEEauthorblockA{\IEEEauthorrefmark{4}Department of Statistics and Data Science, Southern University of Science and Technology, Shenzhen 518055, P. R. China.}}
\thanks{\IEEEauthorblockA{\IEEEauthorrefmark{5}School of Automobile, Chang'an University, Xi'an, 710064, P. R. China (e-mail: wbyang@chd.edu.cn).}}
\thanks{\IEEEauthorblockA{\IEEEauthorrefmark{6}State Key Laboratory of Grassland Agro-Ecosystem, College of Ecology, Lanzhou University, Lanzhou, 730000, P. R. China (e-mail: shangzhh@lzu.edu.cn).}}
}
\maketitle

\begin{abstract}
Grassland monitoring is essential for the sustainable development of grassland resources. Traditional Internet of Things~(IoT) devices generate critical ecological data, making data loss unacceptable, but the harsh environment complicates data collection. Unmanned Aerial Vehicle (UAV) and mobile edge computing (MEC) offer efficient data collection solutions, enhancing performance on resource-limited mobile devices.
In this context, this paper is the first to investigate a UAV-enabled time-sensitive data collection problem (TSDCMP) within grassland monitoring edge networks~(GMENs). Unlike many existing data collection scenarios, this problem has three key challenges. First, the total amount of data collected depends significantly on the data collection duration and arrival time of UAV at each access point~(AP). Second, the volume of data at different APs varies among regions due to differences in monitoring objects and vegetation coverage. Third, the service requests time and locations from APs are often not adjacent topologically. To address these issues, We formulate the TSDCMP for UAV-enabled GMENs as a mixed-integer programming model in a single trip. This model considers constraints such as the limited energy of UAV, the coupled routing and time scheduling, and the state of APs and UAV arrival time. Subsequently, we propose a novel cooperative heuristic algorithm based on temporal-spatial correlations (CHTSC) that integrates a modified dynamic programming (MDP) into an iterated local search to solve the TSDCMP for UAV-enabled GMENs. This approach fully takes into account the temporal and spatial relationships between consecutive service requests from APs. Systematic simulation studies demonstrate that the mixed-integer programming model effectively represents the TSDCMP within UAV-enabled GMENs. Moreover, the proposed CHTSC algorithm outperforms two superior algorithms across twelve different scale instances.

\end{abstract}

\begin{IEEEkeywords}
Unmanned Aerial Vehicle~(UAV), grassland monitoring, time-sensitive, data collection, temporal-spatial correlations, dynamic programming~(DP).
\end{IEEEkeywords}
\vspace{-0.17in}
%
\IEEEpeerreviewmaketitle
\section{Introduction}
%
%
%
%
%
%

%
%
Grasslands are among the largest terrestrial biomes, covering more than $25\%$ of the terrestrial earth's surface, and serve as significant hotspots of biodiversity in various regions~\cite{torok2021present}. These ecosystems have very high conservation value and provide essential ecosystem services, including food and fodder, water regulation and supply, erosion control, pollinator promotion, and carbon sequestration~\cite{dengler2014biodiversity}. However, grassland biodiversity and the ecosystem services they offer are severely threatened, making their conservation and restoration top priorities~\cite{andrade2015grassland}. Grassland monitoring is crucial for the scientific utilization and rational development of grassland resource, as well as for maintaining ecological balance, which is central to effective grassland management~\cite{ge2019estimation}.

Internet of Things~(IoT), as a crucial next-generation information technology, has been widely applied to various aspects of grassland monitoring. It plays an irreplaceable role in obtaining meteorological, soil, and biological data, studying ecological change patterns, evaluating ecological quality, and maintaining the sustainable development of grassland ecosystems~\cite{wang2022grassland}. Nevertheless, unlike other application areas of IoT, grassland IoT demands higher requirements for monitoring depth, scope, and frequency. 
These devices are primarily used for fine online monitoring of grassland respiration, snow thickness, desertification, \ce{CO_2} content, light, gas exchange, leaf area, and microorganisms. Consequently, they generate vast amounts of monitoring data which are of great significance for exploring the ecological characteristics of grasslands over time, making it imperative to avoid any data loss.
Moreover, the complex and harsh ecological environment of grasslands, the extensive monitoring range, and the lack of information technology facilities pose significant challenges in deploying IoT on grasslands. These factors result in high data collection costs, significant network transmission delays, and substantial equipment energy consumption. Therefore, designing a feasible network and collecting extensive grassland monitoring data have become significant challenges.

Fortunately, Unmanned Aerial Vehicle~(UAV) offers new opportunities for data collection in IoT networks due to their high agility, flexibility, and low cost~\cite{messaoudi2023survey}. Moreover, mobile edge computing~(MEC) has emerged as a widely studied technology in recent years, demonstrating excellent performance in computation-intensive and latency-critical applications on resource-limited mobile devices~\cite{ning2023mobile}. Inspired by these advancements, this work employs a three-tier network architecture consisting of sensor nodes, access points (APs), and a UAV to collect  grassland monitoring data. We consider a scenario in which a UAV with limited energy is periodically dispatched to collect monitoring data from each AP only once in the monitored grassland area. The UAV carrying the collected monitoring data must return to the base station before depleting its energy.

Due to the heterogeneous functionality of sensors used for grassland monitoring, the sensor nodes may collect varying amounts of data at each AP. This can result in data overflow at certain APs where the collected data exceeds their capacity. What is more, data collection requests are sent to the UAV from APs distributed across various locations and at diverse time intervals as runtime progresses. A UAV with limited energy may not be able to collect all data from each AP sequentially during a single trip within a grassland monitoring region. Therefore, it becomes crucial to determine which APs' data should be prioritized for collection by the UAV? Questions arise about the optimal order and duration for data collection at each AP to maximize the total volume of collected data while minimizing or avoiding data overflow. In other words, the primary question is how the UAV incorporates both data collection deadlines (i.e., data overflow time) and diverse data collection locations into its decision-making process.

Motivated by these issues, this paper investigates the time-sensitive data collection maximization problem~(TSDCMP) within UAV-enabled grassland monitoring edge networks~(GMENs). To the best of our knowledge, no existing work explores such models specifically tailored for UAV-enabled GMENs. Consequently, we formulate the TSDCMP in the context of UAV-enabled GMENs as a mixed-integer programming model. This model accounts for constrains including the limited energy of UAV, the coupled routing and time scheduling, and the state of APs and UAV arrival time. Further analysis of the problem reveals several key features. Firstly, the total amount of data collected is highly dependent on both the data collection duration and the arrival time at each AP. Secondly, the volume of data at each AP typically varies among regions due to differences in monitoring objects and vegetation coverage. Thirdly, the time and location of service requests from two APs are often not adjacent topologically.

Taking the above features and challenges into consideration, this work proposes a novel cooperative heuristic algorithm
based on temporal-spatial correlations (CHTSC) to solve the TSDCMP for UAV-enabled GMENs. The core idea of CHTSC algorithm is to jointly optimize the flight trajectory, arrival time, and data collection duration for the UAV at each AP. This approach fully considers the temporal and spatial relationships between consecutive service requests from APs.
The overarching goal is to minimize the occurrence of data overflow while simultaneously maximizing the volume of data collected from APs, all within the realistic constraints.


The main contributions of this work are summarized as follows:
\begin{enumerate}
  \item We first present the TSDCMP within UAV-enabled GMENs and then formulate a mixed-integer programming model for the TSDCMP under realistic constraints.
  \item We propose a novel CHTSC algorithm to solve the TSDCMP for UAV-enabled GMENs, which fully considers the temporal and spatial relationships between consecutive service requests from APs.
  \item Simulation studies demonstrate that the mixed-integer programming model effectively represents the TSDCMP within UAV-enabled GMENs. Moreover, the proposed CHTSC algorithm outperforms two superior algorithms across twelve different scale instances.
\end{enumerate}

The rest of this work is structured as follows. Section \ref{Related-Work} reviews related work on the UAV-enabled time-sensitive data collection and relevant solutions. Section \ref{System-Model} presents the system model and problem formulation. Section \ref{Cooperative-Heuristic-Algorithm} details the proposed algorithm for optimizing data collection. Section \ref{Simulation-Studies} discusses the experimental setup and results. Finally, Section \ref{Conclusion} concludes this paper and suggests directions for future research.
\vspace{-0.10in}
\section{Related Work}\label{Related-Work}
This work focuses on two aspects: the modeling of UAV-enabled time-sensitive data collection for grassland monitoring and the development of a CHTSC. Therefore, the related literature is reviewed from the these perspectives.

In recent year, UAVs have increasingly been utilized for time-sensitive data collection in various practical scenarios. Samir~\emph{et~al.}~\cite{samir2019uav} investigate the maximizing the number of served IoT devices by jointly optimizing the trajectory of a UAV and the radio resource allocation under a data upload deadline. In addition, Zhan~\emph{et~al.}~\cite{zhan2021multi} design a novel framework for multi-UAV-enabled mobile-edge-computing~(MEC) to provide flexible computational assistance to IoT devices with strict deadlines. Wang~\emph{et~al.}~\cite{wang2020priority} address a priority-oriented trajectory planning problem for a UAV-aided time-sensitive heterogeneous IoT network, providing a solution that ensures the network's latency tolerance within a specific time period.
Liu~\emph{et~al.}~\cite{liu2021uav} tackle the UAV-enabled data collection problem to ensure high information freshness in wireless sensor networks. The information freshness is measured by the age of information~(AoI) for each sensor node.
Tran~\emph{et~al.}~\cite{tran2022uav} propose a time-sensitive data collection scheme using a UAV relay-assisted IoT model. This scheme considers the latency requirement for both uplink and downlink channels to enhance the freshness of information. Liu~\emph{et~al.}~\cite{liu2022uav} investigate the UAV trajectory planning problem in a UAV-enabled environmental monitoring system, focusing on the AoI limitations of data in monitoring areas.
Zeng~\emph{et~al.}~\cite{zeng2023aoi} propose a relay-assisted UAV data collection method that takes into account the freshness of collected data in the large-scale disaster area. Cao~\emph{et~al.}~\cite{cao2023energy} consider a priority-oriented UAV-aided data collection problem in a time-sensitive IoT network with movable sensor nodes. Liu~\emph{et~al.}~\cite{liu2024learning} propose a multi-UAV assisted wireless power transmission space-air-ground power IoT framework for data acquisition and computation, aiming to minimizing the average AoI of power devices.
Moreover, Messaoudi~\emph{et~al.}~\cite{messaoudi2023survey} provide a comprehensive review of the scenarios and key technologies for UAV-assisted data collection in IoT applications and discuss UAV-based time-sensitive data collection.
However, these works overlook both the spatial and temporal data correlations. This oversight results in excessive energy consumption by UAV due to round-trip data collection and causes data overflow at sensor nodes due to untimely collection.

Meanwhile, relatively little work focuses on UAV data collection based on temporal and spatial correlations. Li~\emph{et~al.}~\cite{li2022blockchain} propose a blockchain-enhanced spatiotemporal data aggregation model to reduce data redundancy in UAV-assisted wireless sensor networks. In this model, the UAV must collect data to recover the original raw data in each cluster. However, this approach may not be feasible and can lead to data overflow in some clusters due to the UAV's limited energy, as there may not be sufficient time for collection. Moreover, Guo~\emph{et~al.}~\cite{guo2021minimizing} introduce a fine-grained spatiotemporal model to minimize redundant sensing data transmissions in energy-harvesting sensor networks. Fattoum~\emph{et~al.}~\cite{fattoum2023adaptive} propose an adaptive sampling rate algorithm that uses the spatiotemporal correlation of sampled data and the residual energy of sensor nodes to reduce data acquisition in periodic wireless sensor networks. Xie~\emph{et~al.}~\cite{xie2020geographical} study a geographical correlation-based RF-data collection protocol.
Xu~\emph{et~al.}~\cite{xu2024collect} investigate the maximizing accumulative utility for collected data problem by determining an optimal data collection trajectory for a UAV with limited energy. This accumulative utility evaluates the quality of data, which is spatiotemporally correlated and gathered from various clusters. Although references~\cite{guo2021minimizing,fattoum2023adaptive,xie2020geographical,xu2024collect} examine the temporal and spatial correlations of data, they do not consider the time-sensitive aspects of UAV data collection, and also difficult to extend these methods to our work.

Furthermore, team orienteering problem (TOP) with time-varying profit~\cite{gunawan2016orienteering} presents another significant research challenge. Yu~\emph{et~al.}~\cite{yu2022team} study a variant of TOP with service and arrival-time-varying profit. They further extend this work to a robust variant of TOP with decreasing profits, assuming uncertain service times at customers~\cite{yu2022robust}. These studies assume that the accumulated data value decrease linearly with arrival time. Wan~\emph{et~al.}~\cite{wan2024deep} propose an attention-based deep reinforcement learning enabled multi-UAV scheduling system for disaster data collection with time-varying data value, where UAVs act as temporary and mobile relays. Subsequently, they also introduce a hybrid heuristics-based multi-UAV route planning method for time-dependent IoT data collection~\cite{wan2024hybrid}. However, the aforementioned work do not account for the specific factors affecting grassland monitoring scenarios, making it challenging to directly apply these methods to address the grassland monitoring data collection problem.
\section{System Model and Problem Formulation} \label{System-Model}
In this section, we first describe the system models, which include the GMENs model, UAV time-sensitive data collection model, and UAV energy consumption model. Subsequently, we formulate the mathematical programming model for TSDCMP within UAV-enabled GMENs. The notations of key parameters used in this work are listed in Table \ref{tab1}.
\begin{table}[htbp]
 \caption{MEANINGS OF THE NOTATIONS}\label{tab1}
 \begin{tabular}{m{30pt}<{\raggedright} m{200pt}<{\raggedright}}
  \toprule
  Notation                & Meaning  \\
  \midrule
 $N$                 & The number of APs.  \\
 $T$                 & The maximum mission duration of UAV. \\
 $E_{max}$           & The battery capacity of UAV.  \\
 $t_i$               & The arrival time at AP $i$.\\
 $d_{ij}$            & The distance between APs $v_i$ and $v_j$. \\
 $T^c_i$             & The amount of data collection time spent at AP $i$. \\
 $T^f_{ij}$          & The flight duration of UAV between any two adjacent APs $i$ and $j$.  \\
 $T^o_i$             & The data overflow time of AP $i$.  \\
 $T^h_i$             & The hovering duration of UAV at APs $i$.  \\
 $\tilde{H}$         & The height between the UAV and the AP.  \\
 $\mathcal{L}$       & The locations of these APs.  \\
 $\mathrm{v}$        & The velocity of UAV.\\
 $\mathcal{T}$       & The set of time slots.  \\
 $\mathcal{T}^f$     & The set of time slots for the UAV¡¯s flight state.  \\
 $\mathcal{T}^c$     & The set of time slots for the UAV¡¯s data collection state.  \\
 $R$                 & The achievable transmission rate between AP and UAV follows the free-space path loss model.  \\
 $D_i(t)$            & The data volume of AP $i$ in time slot $t$.  \\
 $C_i$               & The total amount of data collected from AP $i$ by the UAV at the location $l_i$ during the data collection duration $T^c_i$.  \\
 $D_{max}$           & The maximum capacity of each AP.  \\
 $D_{th}$          & The threshold of each AP that requires data collection.  \\
 $\alpha^{s}_i(t)$   & The self-growth rate of data for AP $i$ in time slot $t$.  \\
 $\alpha^{r}_i(t)$   & The real-growth rate of data for AP $i$ in time slot $t$.  \\
 $O$                 & The total amount of data overflowed by the UAV during the data collection tour.  \\
 $C$                 & The total amount of data collected by the UAV during the data collection tour.  \\
  \bottomrule
 \end{tabular}
\end{table}
\subsection{Grassland Monitoring Edge Networks Model}
In this work, the GMENS utilize a three-tier network architecture consisting of sensor nodes, APs, and a UAV to collect grassland monitoring data, as illustrated in Fig.~\ref{Data-Collection-model}. The heterogeneous sensors are deployed across the grassland region, constantly gathering data on various aspects of vegetation growth, including meteorology, soil conditions, microbe activity, soil respiration, carbon dioxide levels, and leaf gas exchange \cite{wang2020alpine}. Each homogeneous AP equipped with a limited buffer and an edge sever processes the raw-data collected from the nearby sensor nodes and stores the processed metadata. Due to the continuous monitoring performed by the sensors, APs need to store substantial volume of data. When the data volume at an AP reaches a certain threshold, it sends a data collection request to the UAV. The UAV with limited energy is periodically dispatched to collect data from $N$ APs within the monitored grassland area. It is assumed that the UAV's storage capacity is sufficient to meet the data collection requirements. The UAV can access information from each AP, including location, data volume, data collection rate, and data collection status, at any time and from any location. The base station handles various tasks, such as dispatching the UAV, charging its batteries, and processing the collected monitoring data.
\begin{figure}[htb]
\begin{center}
$\begin{array}{l}
\includegraphics[width=2.8in]{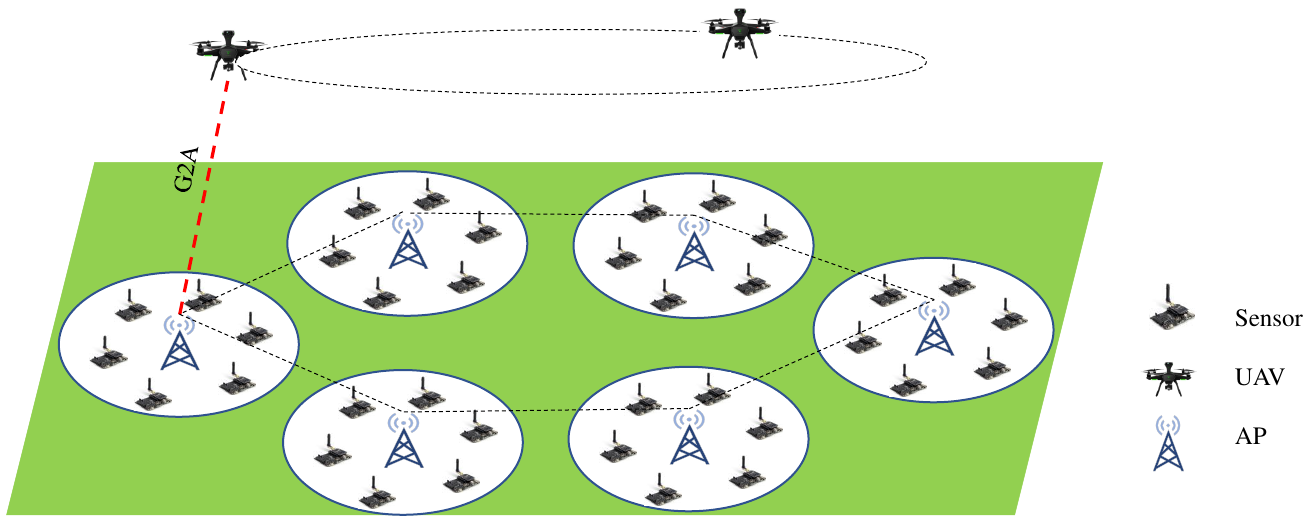}\\
 \end{array}$
\end{center}
\vspace{-0.15in}
\caption{Three-tier network architecture for time-sensitive data collection by a UAV in GMENs.} \label{Data-Collection-model}
\end{figure}

Let $\mathcal{N}_a = \{v_1,\ldots, v_N\}$  denote the set of APs. The locations of these APs are represented as $\mathcal{L} = \{l_i = (x_i, y_i, 0)| 1\leq i \leq N\}$. Due to the uneven terrain of the grassland, the altitude of the UAV relative to the ground is fixed at constant for safety reasons. The UAV collects data from each AP only once and subsequently delivers it to the data center for further processing upon returning to the base station $v_0$  before exhausting its energy. Therefore, the UAV's trajectory for data collection must form a closed tour. The GMENS can be described as a complete directed weighted graph $G = (\mathcal{N}, A)$, where $\mathcal{N} = \{v_0\}\cup \mathcal{N}_a$ represents the set comprising all APs and the base station $v_0$, and $A = \{(v_i,v_j)|v_i,v_j \in \mathcal{N}, i\neq j\}$ denotes the set of the paths connecting any two APs. Each arc $a \in A, a = (v_i,v_j)$ has an associated non-negative weight value $d_{ij}$ that indicates the distance between APs $v_i$ and $v_j$, i.e., $d_{ij} = \parallel l_i -l_j\parallel$. For convenience, we interchangeably refer to AP $v_i$ as AP $i$ or $i$ in the remainder of this paper.

The maximum mission duration $T$ of UAV is divided into equal time slots, with each time slot having a length of $\tau$. We assume that $\tau$ is sufficiently small so that the UAV's location change within $\tau$ is negligible. Let $\mathcal{T} = \{t | 1 \leq t \leq T \}$ denote the set of time slots.

To fully utilize solar energy for power supply, the height of APs is typically set higher than the surface of grassland vegetation. Consequently, the wireless channel between the UAV and AP is dominated by the line of sight (LoS) link. In addition, considering the data collection rates and the energy consumption associated with transmission distance, the data can be collected from an AP only when the UAV hovers directly over it. This operational mode is achieved by following the fly-hover-communication protocol \cite{zeng2019energy}. The achievable transmission rate $R$ between AP and UAV follows the free-space path loss model. It can be defined by
\begin{equation} \label{total-amount-data}
\begin{small}
R = B\log_2(1+\frac{p \beta}{\sigma^2\tilde{H}^2}),
\end{small}
\end{equation}
where $B$ presents the channel bandwidth, $p$ is the transmit power of AP for uploading data link, $\beta$ denotes the channel power gain when the reference distance is set as 1 m, $\sigma^2$ denotes the noise power, and $\tilde{H}$ is the height between the UAV and the AP, where $\tilde{H}$ is greater than the height from the UAV to the ground.
\subsection{Time-Sensitive Data Collection Model}
To reduce the energy consumption of the UAV flying back and forth, we assume that the UAV collects data from each AP only once. The data collection duration of UAV at each AP can be expressed as a set $\mathcal{T}^c = \{T^c_i| T^c_i\geq 0, l_i \in \mathcal{L}\}$ in term of time slots. Meanwhile, the flight duration of UAV between any two adjacent APs $i$ and $j$ can be expressed as a set $\mathcal{T}^f = \{T^f_{ij}| T^f_{ij}\geq 0, l_i, l_j \in \mathcal{L}\}$ in term of time slots.

Assume that the UAV keeps a fixed velocity $\mathrm{v}$. Thus the flight duration of UAV between any two adjacent APs $i$ and $j$ can be calculated by
\begin{equation} \label{fight-duration}
\begin{small}
T^f_{ij} = \frac{\parallel l_i -l_j\parallel}{\mathrm{v}\tau}.
\end{small}
\end{equation}

To ensure that the total duration of data collection and flight for the UAV does not exceed $T$ time slots, we impose the following constraint:
\begin{equation} \label{total-data-collection-duration}
\begin{small}
\sum_{T^c_i \in \mathcal{T}^c} T^c_i + \sum_{T^f_{ij} \in \mathcal{T}^f} T^f_{ij} \leq T.
\end{small}
\end{equation}

The total amount of data collected from AP $i$ by the UAV at the location $l_i$ during the data collection duration $T^c_i$ is denoted as $C_i$, and it can be calculated as
\begin{equation} \label{total-amount-data}
C_i = \tau T^c_i R.
\end{equation}

Note that the data volume of each AP dynamically grows over time during the UAV's data collection process. Let $\alpha^{s}_i(t)$ denote the self-growth rate of data for AP $i$ in time slot $t$. This rate is constant for a given AP but may vary between different APs due to the heterogeneity of the surrounding sensor nodes. The real-growth rate $\alpha^{r}_i(t)$ of data in time slot $t$ can be given as
\begin{equation} \label{data-rate-AP}
\begin{small}
    \alpha^{r}_i(t) =
   \begin{cases}
   \alpha^{s}_i(t), &\mbox{$t \in \mathcal{T}^f$} \\
   \alpha^{s}_i(t) - R , &\mbox{$t \in \mathcal{T}^c$},
   \end{cases}
\end{small}
\end{equation}
where $\mathcal{T}^f$ and $\mathcal{T}^c$ denote the set of time slots for the UAV's flight state and data collection state, respectively. Eq. \eqref{data-rate-AP} means that the real-growth rate of data for AP $i$ is $\alpha^{s}_i(t)$ when the UAV is in flight state, and otherwise it equals the difference between the self-growth rate of data $\alpha^{s}_i(t)$ for AP $i$ and the data collection rate $R$ by the UAV when it collects data from AP $i$. Therefore, the data volume of AP $i$ in time slot $t$ can be calculated as follows:
\begin{equation} \label{data-volume-AP}
D_i(t) = D_i(t-1) + \alpha^{r}_i(t)\tau.
\end{equation}

Each AP operates continuously, and when the monitoring data collected from neighbouring sensors exceeds its maximum capacity $D_{max}$, it results in data overflow, leading to the loss of critical ecological information about the grassland. Moreover, the varying volumes of data collected by each AP from neighboring sensors result in different overflow time for each AP. To avoid data overflow at APs with excessive data collected, the UAV must collect data in real-time based on the service requests of each AP.
Therefore, we set a threshold $D_{th}$ for each AP. When the data volume of AP $i$ exceeds this threshold, the AP will immediately send a data collection request to the UAV, which will then schedule a response. Conversely, if the data volume of AP $i$  remains below the threshold $D_{th}$, data overflow will not occur. Consequently, the volume of data collected by the UAV at AP $i$ satisfies
\begin{equation} \label{data-volume-AP}
D_i-D_{th} \leq C_i \leq D_i.
\end{equation}

\begin{figure}[htb]
\begin{center}
$\begin{array}{l}
\includegraphics[width=2.6in]{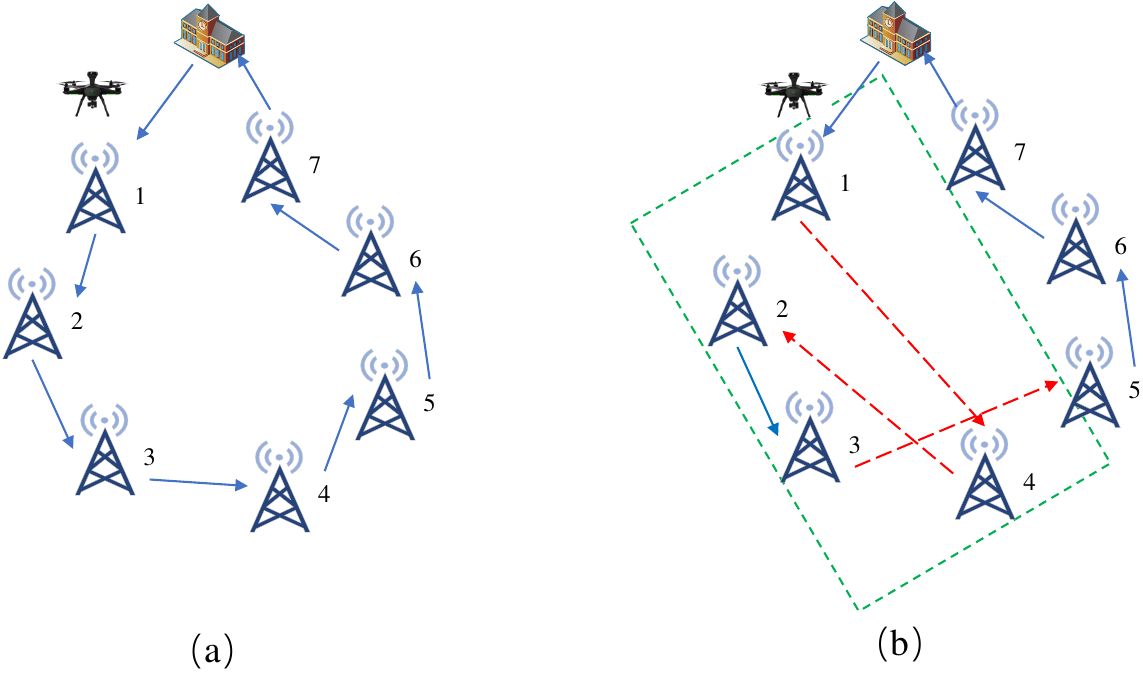}\\
 \end{array}$
\end{center}
\vspace{-0.15in}
\caption{Example of real-time data collection by a UAV in GMENs: (a) The shortest distance Hamiltonian cycle. (b) The path of real-time data collection.} \label{Real-time-data-collection}
\end{figure}
The real-time data collection scheme by the UAV is illustrated in Fig. \ref{Real-time-data-collection}.  Fig. \ref{Real-time-data-collection}a depicts the shortest distance Hamiltonian cycle formed by seven APs based on the distances between them. Fig. \ref{Real-time-data-collection}b shows the path of real-time data collection in response to the data collection requests from the APs. The data collection request signals sent from each AP are first stored in a response queue $Q$ of the UAV in the order of their request service time, as shown in Fig. \ref{Queueing model}. Assume that $Q$ is large enough to store all request signals. There is an unequal time interval between two request signals before and after. Subsequently, the UAV successively collects the monitoring data from each AP based on the sequence in the queue $Q$.
\begin{figure}[htb]
\begin{center}
$\begin{array}{l}
\includegraphics[width=2.1in]{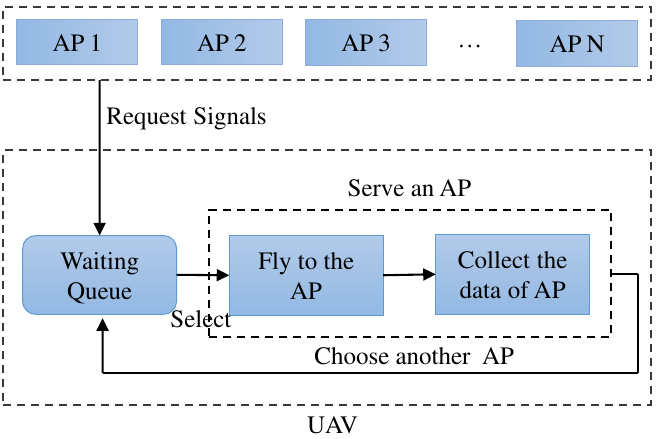}\\
 \end{array}$
\end{center}
\vspace{-0.15in}
\caption{Queueing model for real-time data collection scheme by a UAV.} \label{Queueing model}
\end{figure}

Due to the different priorities of request service for each AP, the UAV should collect data from each AP before it overflows. We define the data overflow time of AP $i$ as $T^o_i$, which can be calculated as follows:
\begin{equation} \label{data-overflow-time}
\begin{small}
T^o_i = \frac{D_{max} - D_{th}}{\alpha^{s}_i}.
\end{small}
\end{equation}

According to Eq. \eqref{fight-duration}, the flight duration of UAV between any two APs $i$ and $j$ in the response queue $Q$ is $T^f_{ij}$. Hence, for the UAV to collect data from AP $j$ before its data overflows, the time of data collection at AP $j$ must satisfy
\begin{equation} \label{total-amount-data}
\begin{small}
T^c_j = t + T^c_i + T^f_{ij} \leq T^o_j.
\end{small}
\end{equation}
Here the data collection delay is ignored compared to the time required for data collection and flight.
\subsection{Energy Consumption Model for UAV}
The rotary-wing UAVs typically operate for short durations due to the limited energy capacity of their batteries. Moreover, the battery lifetime of a UAV is influenced by several factors, including battery size, speed, weight, and environmental conditions. In general, UAV energy consumption primarily consists of two components: propulsion energy and communication energy.

Without loss of generality, the communication energy is negligible compared with the propulsion energy and can be ignored in this work. For simplicity, we also disregard the additional energy consumption caused by the UAV acceleration or deceleration. This assumption is reasonable for the real-time data collection scenario in this work since the UAV maneuvering time constitutes only a small fraction of the total operation time. For propulsion energy, the propulsion power consumption model for rotary-wing UAVs is adopted in this work \cite{zeng2019energy}. It is expressed as follows:
\begin{equation} \label{propulsion-power-consumption-model}
\begin{small}
\begin{aligned}
P(\mathrm{v}) = & P_0\left(1+\frac{3\mathrm{v}^2}{U^2_{tip}}\right)+P_i\left(\sqrt{1+\frac{\mathrm{v}^4}{4\mathrm{v}^4_0}}-\frac{\mathrm{v}^2}{2\mathrm{v}^2_0}\right)^{\frac{1}{2}} \\
&+\frac{1}{2}d_0\rho sA\mathrm{v}^3,
\end{aligned}
\end{small}
\end{equation}
where $P_o$ and $P_i$ are the blade profile power and induced power in hovering status, respectively, $U^2_{tip}$ is the tip speed of the rotor blade, $\mathrm{v}_0$ is the mean rotor induced speed in hover, and $d_0$ and $s$ are the fuselage drag ration and rotor solidity, respectively. Also, $\rho$ and $A$ are the air density and rotor disc area, respectively.

For exposition purposes, the hovering duration and energy consumption at the base station are assumed to be zero, namely $T_0 = 0$ and $E_0 = 0$. Thus, the flight duration on the arc $(i,j)$ can be denoted as
\begin{equation} \label{flight-duration-arc}
\begin{aligned}
   T_{ij} = &T^h_i + T^f_{ij} \\ = &
   \begin{cases}
   \frac{\parallel l_i-l_j\parallel}{\mathrm{v}}, &\mbox{$i = 0, (i,j) \in A$} \\
   T^c_i + \frac{\parallel l_i-l_j\parallel}{\mathrm{v}}, &\mbox{$i \neq 0, (i,j) \in A$},
   \end{cases}
\end{aligned}
\end{equation}
where the UAV hovering duration consists of the data collection duration $\mathcal{T}^c$ and the hovering duration at the base station $T_0$, i.e., $\mathcal{T}^h = \mathcal{T}^c \cup \{T_0\}$.

Correspondingly, the energy consumption on the arc $(i,j)$ can be denoted as
\begin{equation} \label{energy-consumption-arc}
\begin{aligned}
   E_{ij} = &E^h_i + E^f_{ij} \\ = &
   \begin{cases}
   P^f\frac{\parallel l_i-l_j\parallel}{\mathrm{v}}, &\mbox{$i = 0, (i,j) \in A$} \\
   P^c T^c_i + P^f\frac{\parallel l_i-l_j\parallel}{\mathrm{v}}, &\mbox{$i \neq 0, (i,j) \in A$},
   \end{cases}
\end{aligned}
\end{equation}
where $P^c$ denotes the power consumption of UAV during data collection in hovering status with the velocity $\mathrm{v}=0$ and then $P^c = P_0 + P^c_i$.
\vspace{-0.10in}

\subsection{Problem Formation}\label{Problem-Formation}
The goal of the TSDCMP within UAV-enabled GMENs is to maximize the total amount of data collected while minimizing the total amount of data overflowed during the data collection tour within the limited energy capacity $E_{max}$.
To achieve this goal, the problem involves determining a single route for the UAV that departs from and returns to the base station $v_0$. Furthermore, it needs to allocate the appropriate duration of data collection at each visited AP.
In the following sections, we first detail the related notations necessary to solve the problem, and then formulate a mixed-integer programming solution for it.

Let $S(t_i)$ denote the state of AP $i$ when the UAV arrives at it at time $t_i$. There are two possible states, i.e., data collection state $S(t_i) = 0$ and data overflow state $S(t_i) = 1$, satisfying
\begin{equation}
\begin{small}
    S(t_i) =
   \begin{cases}
   0, &\mbox{$t_i \leq T^o_i$} \\
   1, &\mbox{$t_i > T^o_i$},
   \end{cases}
\end{small}
\end{equation}
where $t_i$ and $T^o_i$ are the arrival time and the data overflow time for AP $i$, respectively. Then, the total amount of data overflowed $O$ for all APs can be expressed as
\begin{equation} \label{total-data-overflow}
\begin{small}
 O = \sum^N_{i=1} S(t_i)(t-T^o_i)\tau R.
\end{small}
\end{equation}

Meanwhile, the total amount of data collected by the UAV during the data collection tour,  eventually gathered from all APs, can be expressed as
\begin{equation} \label{total-amount-data-collection}
\begin{small}
\begin{aligned}
C = &\sum^N_{i=1} ((1-S(t_i)C_i)+S(t_i)\bar{C}_i) \\
  = &\sum^N_{i=1} ((1-S(t_i)\tau T^c_i R)+ S(t_i)\tau \bar{T}^c_i R),
\end{aligned}
\end{small}
\end{equation}
where $\bar{C}_i$ and $\bar{T}^c_i$ respectively denote the amount of data collected and the data collection duration  by the UAV after the data overflow at AP $i$. Eq. \eqref{total-amount-data-collection} means that when the AP $i$ is in the data collection state at arrival time $t_i$, i.e., $S(t_i) = 0$, the amount of data collected by UAV at time $t_i$ is $C_i(1-S(t_i))$, otherwise, $C_i = \bar{C}_iS(t_i)$.

To further understand and solve the TSDCMP within UAV-enabled GMENs, we propose a mixed-integer programming formulation for the problem. This formulation can be expressed as follows:
\begin{subequations}\label{data-collection-maximization-problem-mathematical-model}
\begin{align}
\max \quad &C - \delta O 
\label{maximizing-data-collection}\\
\emph{s.t.}\quad &\sum^N_{i=1} P(0)T^c_i + \sum^N_{i=0} \sum^N_{j\neq i}x_{ij}P(\mathrm{v})T^f_{ij} \leq E_{max}, \forall i,j\in \mathcal{N}_a,  \label{total-time-constraint}\\
&t_i + T^c_i + T^f_{ij} -t_j \leq M(1-x_{ij}), \forall i,j\in \mathcal{N}_a, \label{couple-routing-decision-with-time} \\
&t_i > T^o_i - M(1-S(t_i)), \forall i \in \mathcal{N}_a, \label{arrival-time-constraint-a} \\
&t_i \leq T^o_i + MS(t_i), \forall i \in \mathcal{N}_a, \label{arrival-time-constraint-b} \\
&D_i-D_{th} \leq C_i \leq D_i, \forall i \in \mathcal{N}_a, \label{data-collection-constraint-AP}\\
&\sum^N_{j=1,j\neq k}x_{kj} = y_k, \forall k \in \mathcal{N}_a, \label{indicator-variable-x-y}\\
&\sum^N_{i = 0,i\neq k}x_{ik} = \sum^{N}_{j =0,j\neq k}x_{kj}, \label{flow-conservation-condition}\\
&\sum^N_{j = 1}x_{0j} = \sum^N_{j =1}x_{j0} = 1, \label{start-end-condition}\\
&S(t_i) \in \{0,1\}, \forall i \in \mathcal{N}_a, \label{variable-s}\\
&x_{ij} \in \{0,1\}, \forall i,j \in \mathcal{N}_a, \label{variable-x}\\
&y_i \in \{0,1\}, \forall i \in \mathcal{N}_a, \label{variable-y}\\
&T^c_i, t_i \geq 0, \forall i \in \mathcal{N}_a, \label{variable-time}
\end{align}
\end{subequations}
where $\delta$ is the penalty parameter; the binary variables $x_{ij} =1 $ if the UAV flies directly from AP $i$ to $j$, and $0$ otherwise; the binary variables $y_i = 1$ if AP $i$ is collected data by the UAV, and $0$ otherwise, which is an auxiliary decision variable that is uniquely determined by variables $x_{ij}$. 

The objective function \eqref{maximizing-data-collection} is to maximize the total amount of data collected while minimizing the total amount of data overflowed for all APs by the UAV on the data collection tour, where the total amount of data collected $C$ by the UAV defined in Eq. \eqref{total-amount-data-collection} and the total amount of data overflow for all APs defined in Eq. \eqref{total-data-overflow}. Constraints \eqref{total-time-constraint} ensure that the total energy consumption of both data collection and flight in a scheduling cycle of UAV cannot exceed the total energy capacity $E_{max}$. Constraints \eqref{couple-routing-decision-with-time} couple routing decisions with the time scheduling, where $M$ is a large positive constant \cite{cacchiani2020models}. The binary variables $x_{ij} =1$ if the UAV arrives at AP $j$ directly after departing from AP $i$, the arrival time at AP $j$ must be equal to or greater than the sum of both arrival time  and data collection time of AP $i$, and flight time between APs $i$ and $j$. Constraints \eqref{arrival-time-constraint-a} and \eqref{arrival-time-constraint-b} denote the state of AP $i$ when the UAV arrives at it at time $t_i$. Constraints \eqref{data-collection-constraint-AP} ensure that the amount of data collection by the UAV at each AP is within a safe range. Constraints \eqref{indicator-variable-x-y} establish the relationship between variables $x$ and $y$ and ensure that each AP can be collected data at most once. Constraints \eqref{flow-conservation-condition} are the flow conservation for each AP. Constraints \eqref{start-end-condition} ensure that the UAV begins and ends its route at the base station. Constraints \eqref{variable-s} show the state indicator variables at arrival time $t_i$.
Constraints \eqref{variable-x}--\eqref{variable-time} define the domain of the decision variables.

Further analysis of the optimization problem \eqref{data-collection-maximization-problem-mathematical-model}, we find that maximizing the data collection efficiency of the UAV inherently involves minimizing the total amount of data overflowed during the data collection tour. Furthermore, the total amount of data collected and data overflowed depends on the duration of data collection and the arrival time at each AP during the data collection tour under the limited total energy capacity $E_{max}$. Minimizing the volume of overflowed data necessitates precise scheduling decisions for the UAV. This often lead to increased  energy consumption for UAV flights, thereby potentially reducing the total amount of data that can be collected by the UAV. Therefore, finding a solution that maximizes the total amount of data collected while minimizing the volume of overflowed data under the UAV's energy constraint is crucial. In the following section, we develop an efficient algorithm based on temporal-spatial correlations to address this problem.
\begin{theorem}
The TSDCMP within UAV-enabled GMENs is NP-hard.
\end{theorem}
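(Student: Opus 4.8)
The plan is to prove NP-hardness by a polynomial-time reduction from a well-known NP-hard routing problem. The natural candidate is the \emph{Orienteering Problem} (OP), or equivalently the Traveling Salesman Problem (TSP): given a complete weighted graph and a budget, decide whether there is a closed tour visiting all vertices within the budget. First I would argue that a decision version of the TSDCMP is in NP, since given a route (the $x_{ij}$), arrival times $t_i$, collection durations $T^c_i$ and state variables $S(t_i)$, all the constraints \eqref{total-time-constraint}--\eqref{variable-time} and the objective value \eqref{maximizing-data-collection} can be checked in polynomial time. Then I would set up the reduction so that a TSDCMP instance restricts to (essentially) a TSP/OP instance.

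The key steps of the reduction: starting from an arbitrary TSP instance on $N$ cities with distances $d_{ij}$, I would build a GMEN with one AP per city located so the pairwise distances match $d_{ij}$ (using the fact that the model only ever uses $d_{ij}=\lVert l_i-l_j\rVert$; if an exact Euclidean embedding is awkward, I would instead reduce from \emph{metric} TSP or from the graphical/general TSP where only the distance matrix matters, since nothing in \eqref{data-collection-maximization-problem-mathematical-model} forces a geometric realization). I would then choose the remaining parameters to neutralize the time-sensitive features: set all self-growth rates $\alpha^s_i$ equal and the overflow times $T^o_i$ large enough that no AP can overflow within any feasible tour, forcing $S(t_i)=0$ and hence $O=0$; set $D_{\max}$, $D_{th}$ and the rate $R$ so that each AP contributes a fixed amount of collected data with a fixed small collection time $T^c_i$ (via constraint \eqref{data-collection-constraint-AP}); and set $E_{\max}$ equal to $P(\mathrm{v})$ times (the TSP budget)$/\mathrm{v}$ plus the fixed collection energy $\sum_i P(0)T^c_i$. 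Under these choices, $C$ is maximized exactly when the UAV visits all $N$ APs, which is possible within the energy budget \eqref{total-time-constraint} if and only if the underlying TSP instance has a tour of length at most the budget. Thus the TSP instance is a YES-instance iff the TSDCMP instance admits a feasible solution achieving the maximum possible objective $\sum_i C_i$.

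The main obstacle I expect is handling the geometric/metric structure of the distances cleanly: the model is phrased with APs at coordinates $l_i=(x_i,y_i,0)$, so a reduction from \emph{general} TSP needs care because an arbitrary distance matrix need not be realizable in the plane, and even planar-realizable matrices need not be realizable with polynomially-bounded rational coordinates. I would address this either by reducing from \emph{Euclidean} TSP (which is itself NP-hard, by Papadimitriou's theorem) so the coordinates are given directly, or by observing that the hardness argument only uses the abstract optimization model \eqref{data-collection-maximization-problem-mathematical-model} in which $d_{ij}$ is an arbitrary nonnegative weight, so a reduction from metric/graphical TSP suffices and the coordinate interpretation is immaterial to the combinatorial core. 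A secondary, minor technical point is making the ``neutralizing'' parameter choices consistent with \emph{all} constraints simultaneously (in particular \eqref{couple-routing-decision-with-time} with the big-$M$ and the lower bound in \eqref{data-collection-constraint-AP}); I would verify that a single consistent choice of $D_i, D_{th}, D_{\max}, R, \alpha^s_i, T, M, \delta$ exists, which is routine once the dominant structure is fixed. Finally I would remark that since the feasibility/optimality decision version is NP-complete, the optimization problem TSDCMP is NP-hard, completing the proof.
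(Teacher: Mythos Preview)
Your proposal is correct and, if anything, more carefully argued than the paper's own proof, but it takes a genuinely different route. The paper establishes NP-hardness by reducing from the \emph{Team Orienteering Problem with time-varying profit} (TOP with time-varying profit), arguing that TSDCMP is essentially a structured instance of that problem: each AP plays the role of a customer whose profit $C_i$ depends on the service duration $T^c_i$ and whose overflow deadline $T^o_i$ constrains the arrival time, with the UAV's mission duration $T$ acting as the time budget. Since TOP with time-varying profit is already known to be NP-hard, and TSDCMP specializes/instantiates it, hardness follows. By contrast, you reduce from (Euclidean) TSP/Orienteering and deliberately \emph{neutralize} the time-sensitive machinery---pushing $T^o_i$ out of reach so that $S(t_i)\equiv 0$ and $O=0$, fixing the per-AP contribution, and translating the tour-length budget into the energy constraint \eqref{total-time-constraint}. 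Your approach buys a more elementary and self-contained argument (the base problem is textbook, and you do not need the reader to accept the NP-hardness of TOP with time-varying profit as a black box), and it makes explicit that the routing component alone already carries the hardness; the paper's approach buys brevity and emphasizes the structural kinship between TSDCMP and the orienteering literature it builds on. Your attention to the Euclidean-embedding issue (and the fallback to Papadimitriou's result on Euclidean TSP) is a technical nicety that the paper does not address.
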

\begin{proof}
We demonstrate that the TSDCMP within UAV-enabled GMENs is NP-hard by reducing it from a classic NP-hard problem, namely the TOP with time-varying profit \cite{gunawan2016orienteering,yu2022team}. This reduction shows that the the TSDCMP within UAV-enabled GMENs can be transformed into an TOP in $G$ as follows.

Given a node and an associated arc in complete directed weighted graph $G (\mathcal{N},A)$, where each node $v_i \in \mathcal{N}$ has a positive profit $p_i$, and a profit function $f_i(a_i, s_i)$ varying with the arrival time $a_i$ and the duration of service time $s_i$. Nodes $v_0$ and $n+1$ correspond to the starting and ending the depots, respectively, and do not generate profit. Each node can be visited at most once within a given limited time $T_{max}$. The aim of the TOP with time-varying profit is to determine a route that visits a subset of vertexes, and to determine the duration of the arrival time and service time spent on each visited vertex, so as to maximize the total collected profit under the given limited time $T_{max}$.

We reduce the TOP with time-varying profit in $G (\mathcal{N},A)$ to the TSDCMP within UAV-enabled GMENs as follows. The profit collected at each AP $i \in \mathcal{N}$ is $C_i$ with the duration of service time $T_i \in \mathcal{T}^c$, and the arrival time $a_i$ needs to be as early as possible to meet the data overflow time $T^o_i$. The volume of data stored $D_i$ at each AP varies significantly, which means that the hovering duration $T_i$ of the UAV for data collection at each AP is different if the data volume of each AP after data collection is at least below its threshold $D_{th}$. Consequently, the amount of data collected $C$ by the UAV at each AP also varies. Finally, the total amount of data collected by the UAV is different in each data collection mission. The overall objective is to collect as much data as possible while ensuring that the data collection duration of UAV is within the mission duration $T$. 
 Since the TOP with time-varying profit is NP-hard \cite{gunawan2016orienteering,yu2022team}, the TSDCMP within UAV-enabled GMENs is NP-hard.
\end{proof}
\section{Cooperative Heuristic Algorithm Based on Temporal-Spatial Correlations}\label{Cooperative-Heuristic-Algorithm}
In this section, we propose a novel CHTSC to solve the TSDCMP within UAV-enabled GMENs. The problem can be partitioned into two subproblems, namely the trajectory planning and the data collection duration scheduling. The former determines the flight trajectory of UAV during data collection, while the latter focuses on allocating the arrival time and data collection duration at each AP. Subsequently, an iterated local search is adopted to plan the trajectory of UAV. Given the vast number of trajectories generated by iterated local search, it is particularly crucial to quickly solve the UAV data collection duration scheduling subproblem. To address this, we design an efficient algorithm modified from the dynamic programming (DP) algorithm.

In the following paper, we detail the CHTSC called the iterated local search with modified DP (ILS-MDP) to tackle the TSDCMP within UAV-enabled GMENs.
\vspace{-0.15in}
\subsection{Framework of ILS-MDP for TSDCMP}
ILS-MDP is a hybrid heuristic algorithm that aptly solves the TSDCMP within UAV-enabled GMENs by leveraging the cooperation between ILS and MDP. The framework of ILS-MDP for solving the problem is illustrated in Algorithm \ref{data-collection-temporal-spatial-framework}. The ILS comprises perturbation and local search. At the beginning of ILS, we initialize a feasible data collection scheme $S = \{(V,T^c)\}$ and then perform perturbation and local search on $S$. In each iteration, a perturbation is employed to generate an improved UAV flight trajectory. Subsequently, the solution is explored and improved by the local search, which consists of several neighborbood structures. In each neighborbood structure, the search operators such as \emph{or-opt, swap, and 2-opt}~\cite{JIAO2024108084} are applied to the current routing decisions until the best routing is found. The MDP algorithm is utilized to evaluate each improved routing $V$, and to determine the data collection duration $T^c$ as well as the objective function value $f(V,T^c)$. Finally, the algorithm terminates when the optimal solution can no longer be further improved within the termination criteria $T_{max}$.
\begin{algorithm}[htbp]
\begin{small}
\caption{The framework of ILS-MDP for TSDCMP.} \label{data-collection-temporal-spatial-framework}
\KwIn{A feasible route $V$ and the termination criteria $T_{max}$\;}
\KwOut{Optimal time-sensitive data collection scheme $S^*$\;}
Initialize a feasible data collection scheme $S = \{(V,T^c)\}$, let $S^* \leftarrow S$ and $t \leftarrow 0$  by calling Algorithm \ref{Algorithm2}\;
\While{$t < T_{max}$}
{$S'$ $\leftarrow$ \emph{Perturb} $(S^*)$ by calling Algorithm \ref{Perturbation}\;
\If{No new route generated}{\Continue\;$t \leftarrow t+1$\;}
 $S''$ $\leftarrow$ \emph{LocalSearch} ($R$, $S'$, MDP)\;
\eIf{$f(S'')>f(S^*)$}{$S^* \leftarrow S'', t \leftarrow 0$\;}{$t \leftarrow t+1$\;}
}
\end{small}
\end{algorithm}

We present the details on the initialization, perturbation, and local search of ILS-MDP in the following sections.
\subsection{Initialization for ILS-MDP}
Through in-depth investigation on the TSDCMP within UAV-enabled GMENs, we realize that the shortest distance Hamiltonian cycle is not necessarily the best solution and even the infeasible, despite its potential to minimize the UAV's energy data collection energy consumption. Conversely, a more general solution, which may not produce the shortest distance Hamiltonian cycle or minimize energy consumption, could be feasible and potentially optimal for the UAV. This discrepancy arises because the data collection request signals of APs in a response queue $Q$ are sorted by their service request order. This sorting introduces a temporal correlation between adjacent service requests in $Q$. Typically, the APs at the front of $Q$ are prioritized on the data collection tour. To enhance the algorithm's decision-making efficiency, we leverage the data collection deadlines (i.e., data overflow time) of the APs to construct a Hamiltonian cycle. This cycle, referred to as the shortest time Hamiltonian cycle $H$, is illustrated in Fig. \ref{Passer-By-Data-Collection}a.
\begin{figure}[htb]
\begin{center}
$\begin{array}{l}
\includegraphics[width=2.8in]{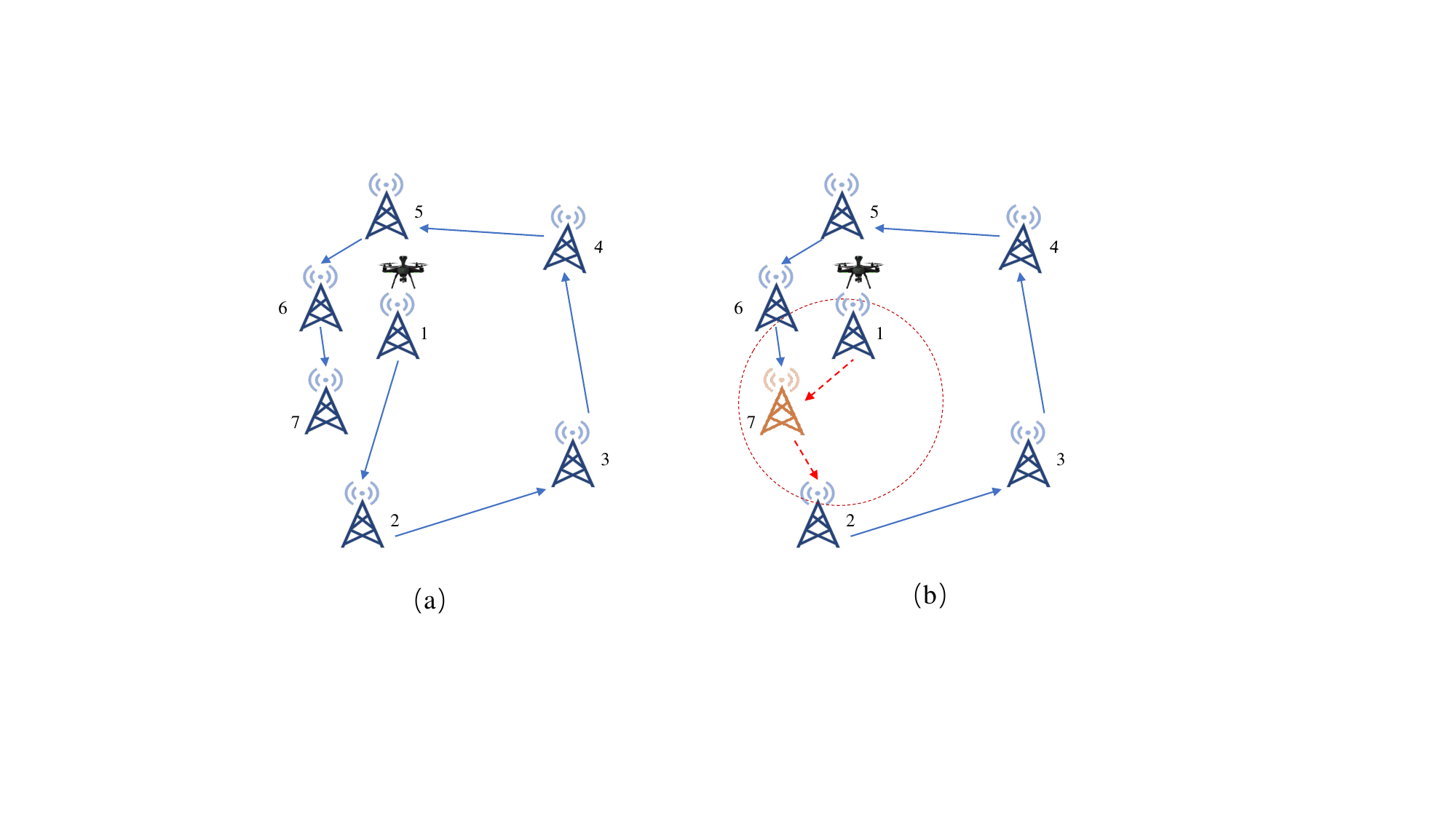}\\
 \end{array}$
\end{center}
\vspace{-0.15in}
\caption{Example of the local search strategy based on temporal-spatial correlations for passer-by APs: (a) The shortest time Hamiltonian cycle. (b) The APs data collection based on temporal-spatial correlations.} \label{Passer-By-Data-Collection}
\end{figure}

On the basic of the shortest time Hamiltonian cycle $H$, we need to select a feasible route for data collection by the UAV. Intuitively, two APs with adjacent priorities of request service may not be spatially adjacent. Therefore, to reduce excessive energy consumption from the UAV flying back and forth \cite{JIAO2024108084}, the route should be optimized within the shortest time hamiltonian cycle $H$, taking into account both the priority of service time and spatial proximity, i.e., \emph{temporal-spatial correlations}. In specific, we optimize the data collection order of APs without violating the UAV's energy constraint to minimize data overflow. This involves ensuring that APs with similar service priorities are as close as possible spatially, thereby reducing energy consumption and maximizing the amount of data collection by the UAV.

In light of the aforementioned discussion, we assign a score to evaluate the order in which each AP is served during the initialization for ILS-MDP. The scoring mechanism fully considers both the time-sensitivity and geographical proximity among APs. More specifically, an AP with an earlier service time and shorter required flight time will have a higher priority to be serviced first. The score $\omega_i$ of AP $i$ is calculated as follows:
\begin{equation} \label{score}
\begin{small}
\omega_i = \frac{1}{i\Delta T_i},
\end{small}
\end{equation}
where $i$ denotes the order in which AP $i$ is served in the shortest time Hamiltonian cycle $H$, and $\Delta T_i$ is the increased time after inserting AP $i$ into the current sequence of APs $V$.

A complete feasible solution $S$ can be denoted as $S = \{(V,T^c)|V=\{v_0, v_1, \ldots, v_n, v_0\},T^c=\{0, T_1^c, \ldots, T_n^c, 0\}\}$ for the shortest time Hamiltonian cycle $H$ constructed by the response sequence $Q$, where $V$ and $T^c$ are the visited sequence of APs and the corresponding data collection duration for these APs, respectively. Firstly, we initialize the visited sequence of APs $V$ consisting only of the BS $v_0$. Secondly, we attemp to select AP $i$ with the highest score $\omega_i$ to be added to the visited sequence $V$ and calculate the arrival time $t_{i^*}$ for AP $i^*$.
If the UAV successfully reaches AP $i^*$ before its data overflow time $T^o_{i^*}$, then AP $i^*$ is inserted into the visited sequence $V$. Subsequently, the data collection duration is calculated and the relevant information is updated. Otherwise, if the UAV does not reach AP $i^*$ before its data overflow time $T^o_{i^*}$, AP $i^*$ is successively inserted into each possible position within the visited sequence $V$. If the status of all APs in $V$ remains unchanged after the insertion of AP $i^*$ into a specific position, the insertion is successfully, and the related information is updated accordingly. If the AP $i^*$ fails to insert $V$, it is added to the end of the visited sequence $V$, and the related information will be updated simultaneously. The detailed process is displayed in Algorithm \ref{data-collection-temporal-spatial}.
\begin{algorithm}[htb]
\begin{small}
\caption{Initialize a feasible data collection scheme.} \label{data-collection-temporal-spatial}\label{Algorithm2}
\KwIn{The set of all APs $\mathcal{N}_a$ and BS $v_0$, the self-growth rate of data for each AP $\alpha^{s}_i(t)$, and the UAV's battery capacity $E_{max}$\;}
\KwOut{A feasible data collection scheme $S$\;}
Generate the shortest time Hamiltonian cycle $H$ based on the response queue $Q$, i.e., $H \leftarrow Q$\;
Let $S = \{(V,T^c)|V=\{v_0,v_0\},T^c=\{0,0\}\}$\;
\While{$H \neq \emptyset$}{
\For{$i \leftarrow 1$ to $|H|$}{$\omega_i \leftarrow \frac{1}{i\Delta T_i}$\;}
$i^* \leftarrow \arg \max_{i\in \mathcal{N}_a} \omega_i$\;
Calculate the arrival time $t_{i^*}$ for $i^*$\;
\eIf{$t_{i^*} \leq T^o_{i^*}$}{
$T^c_{i^*} \leftarrow \frac{D_{i^*}+\alpha^{s}_i t_{i^*} -D_{th}}{R}$\;
$V \leftarrow V \cup \{i^*\}$\;
$H \leftarrow H \backslash \{i^*\}$\;
$T^c \leftarrow T^c \cup \{T^c_{i^*}\}$\;
$S(t_{i^*})\leftarrow 0$\;}{\ForEach{AP $j \in V$}{Insert $i^*$ next to $j$\;
\If{APs in $V$ have no status changes}{
Execute the same operation as lines 10-15 and update the related information for subsequent nodes\;
\textbf{break}\;
}
}
\If{Node $i^*$ insertion failed}
{
$V \leftarrow V \cup \{i^*\}$\;
$H \leftarrow H \backslash \{i^*\}$\;
$T^c \leftarrow T^c \cup \{T^c_{i^*}\}$\;
$S(t_{i^*})\leftarrow 1$\;
}
}
}
\end{small}
\end{algorithm}
\vspace{-0.09in}
\subsection{Perturbation and Local Search Based on Temporal-Spatial Correlations}
\subsubsection{Perturbation}
The perturbation method plays a crucial role as an integral component of ILS-MDP, aimed at enhancing the algorithm's search capabilities by disrupting local optimal solutions. The core idea of the perturbation method is to remove a subset of visited APs and then insert unvisited APs to generate a diversified solution.

We define a perturbation strength parameter $\xi$ to denote the number of APs to be removed from the route. Considering the temporal-spatial correlations between APs, we set $\xi = |V|/10 +1$, where $|V|$ is the number of APs in the current route. Algorithm \ref{Perturbation} describes the perturbation method process. Firstly, the $\xi$ visited APs are randomly selected and removed from the route $V$, and then the removed $\xi$ APs are randomly reinserted into the current route $V$. If all inserting positions satisfy the UAV's energy constraint and yield a higher the objective function value
compared to the original solution, the perturbation of the solution $S$ is considered successful, and solution $S$ is updated with the perturbed solution $S^*$. Otherwise, the perturbation of solution $S$ is deemed unsuccessful. After perturbation, the algorithm enters the local search phase.
\vspace{-0.20in}
\begin{algorithm}[htb]
\begin{small}
\caption{Perturb generate a diversified solution.}\label{Perturbation}
\KwIn{A feasible solution $S$\;}
\KwOut{A feasible new solution $S^*$ generated after disturbance\;}
 Randomly select and remove $\xi$ visited APs from $V$\;
 Randomly reinsert $\xi$ removed APs into $V$\;
  \eIf{$E(S^*) \leq E_{max}$ and value $(S)$ $\leq$ value $(S^*)$}{
  $S \leftarrow S^*$\;
  \Return true;
  }
  {
  No feasible solution $S^*$\;
  \Return false;
  }
\end{small}
\end{algorithm}
\subsubsection{Local Search}
Local search is another key component in ILS-MDP. The basic idea of local search is to start with a current solution and iteratively explore neighboring solutions by employing several neighborbood structures based on the characteristics of the TSDCMP. The neighborbood structure defines the set of neighboring solutions that can be obtained by utilizing the search operators to the current solution. Specific to this problem, a given solution can be improved by adjusting the order of APs' request service to reduce data overflow and increase data collection. In this work, we adopt three classical search operators (i.e., \emph{2-opt}, \emph{swap}, and \emph{or-opt}) for the flight trajectory problem. They are described as follows:
\begin{itemize}
  \item \emph{2-opt}: Remove two non-adjacent edges from the current solution and reconnecting them in a different way to generate a new solution.
  \item \emph{swap}: Exchange the positions of two APs in the current route.
  \item \emph{or-opt}: Remove a subsequence of $\delta$ consecutive of APs and reinsert them at a different position, and the value of $\delta$ is set to 1, 2, or 3 in this work.
\end{itemize}

The three neighborhood search operators work within the same route. The goal of these operators is to increase the total amount of data collected while minimizing the total amount of data overflowed by adjusting the order of visiting APs
within a single route. This adjustment fully considers the temporal-spatial correlations of the TSDCMP. In each iteration, the three neighborhoods are carried out in the given sequence of APs. For each neighborhood, the best improvement strategy is adopted to exploit the best improved neighborhood found so far and then continue searching for a better solution.
\subsection{Modified DP Algorithm for Data Collection Duration Scheduling Problem}
When a UAV's trajectory is generated employing the ILS, the data collection duration for the given UAV trajectory must be determined by solving the data collection duration scheduling problem (DCDSP). Considering the key characteristic of the DCDSP, we propose a novel variation of the modified dynamic programming (MDP) algorithm to effectively tackle the problem. The MDP extends the DP by incorporating the temporal and spatial correlations of APs into the scheduling process for UAV data collection.

The basic DP approach is initially proposed by Bellman. The core idea of DP is to solve a complex problem using a divide-and-conquer strategy \cite{cormen2022introduction}. Specifically, the DP approach decomposes the complex original problem into overlapping subproblems and subsequently solves them recursively. This recursive methodology avoids redundant recalculations of solutions, enhancing computational efficiency.

We adopt the MDP algorithm to address the DCDSP. For any $k \in \{1,\ldots,N\}$ and $j \in \{0,\ldots,T\}$, let $f(k,j)$ denote the maximum total amount of data collected by selecting the first $i$ APs, with a total available service time equal to $j$ and benefit $p_k$. That is, let
\begin{equation} \label{dynamicprog}
\begin{small}
\begin{aligned}
   f(k,j)  \\=&
   \begin{cases}
   \max\{f(k-1,j),f(k-1,j-T_i)+p_k\} &\text{if } j \geq T_i,\\
   f(k-1,j), &\text{otherwise}.
   \end{cases}
\end{aligned}
\end{small}
\end{equation}
This is the MDP recursive function for the DCDSP, which enables us to compute $f(k,j)$ using Algorithm \ref{MDP-recursive-function}.

The algorithm \ref{MDP-recursive-function} outlines the general framework of the MDP algorithm for solving the DCDSP. Initially, the parameters for both the DCDSP and the MDP algorithm are input~(lines 1-3). Subsequently, the MDP algorithm is then utilized to compute the amount of data collected by the UAV, the data overflow for each AP, and the flight time between the APs (lines 4-24). Finally, the flight time $T^{f}$ is updated, and the optimal data collection during for the UAV at each AP is determined based on the state transition path (lines 27-30).
\begin{algorithm}[htb]
\begin{small}
\caption{MDP Algorithm for the DCDSP}\label{MDP-recursive-function}
\KwIn{The sequence of APs $V=\{v_0, v_1, \ldots, v_0\}$ in a given route, the total remaining service time $T$, the maximum service time per AP $\check{T}$, and $\epsilon$ is a relatively small positive real number.}
\KwOut{The optimal data collection duration for each AP $\check{T}^*$}
Initialize the total time spent in flight $T^{f} \leftarrow 0$, the data collection duration for each AP $T_i^c \in \check{T}$, $Trace \leftarrow \emptyset$, $DP[0:N,0:T]\leftarrow \epsilon$, $Data[0:N,0:N] \leftarrow \emptyset$\;
Initialize $Data[0:N,0:N] \leftarrow 0$ to record the amount of data collected and overflow for each state\;
Initialize the benefits of each state $p$\;
\For{$i\leftarrow 1$ to $N$}{
    Update the flight time $T^{f}$\;
    \While{$j \leq i*\check{T}$ and $j \leq T$}{
        \For{$k\leftarrow 1$ to $j$}{
            $t \leftarrow T^f + j - k$\;
            Calculate the amount of data stored $D_i$ by node $i$ in the state\;
            \If{$D_i - \alpha_i*k > D_{th_i}$}{
                \Continue
            }
            \If{$DP[i-1][j-k] == \epsilon$}{
                \Continue
            }
            $p \leftarrow value(Data[i-1][j-k], t, v_i)$\;
            \If{$p > DP[i][j]$}{
                $DP[i][j] \leftarrow p$\;
                $Trace[i][j] \leftarrow p$\;
                Update $Data[i][j]$\;
            }
        }
    }
}
Select the data collection duration $T$ corresponding to the maximum benefits in DP\;
\For{$i\leftarrow N$ to $1$}{
    Update the flight time $T^{f}$\;
    $\check{T}^*[i] \leftarrow Trace[i][T]$\;
    $T \leftarrow T \setminus Trace[i][T]$\;
}
\end{small}
\end{algorithm}
\section{Simulation Studies}\label{Simulation-Studies}
In this section, we present simulation experiments to assess the effectiveness of the developed model and the performance of the proposed solution approach for the TSDCMP.
\vspace{-0.1in}
\subsection{Simulation Settings}
The simulations are conducted on a workstation equipped with an AMD Ryzen Threadripper 3970X 32-Core CPU running at 3.79GHz, 192GB RAM, and operating on Ubuntu 20.04.4 LTS using a single thread. All simulation programs were developed using C++ programming language. It is important to note that all variables in this work are assigned uniform units for simplicity.
\subsubsection{Instance Generation}
Given the absence of an established benchmark for the TSDCMP, we adapt a subset of Solomon's instances to create the test dataset for this study\footnote{The Solomon instances are obtained from http://neo.lcc.uma.es/vrp/vrp-instances/.}, wherein we proportionally magnify the coordinates in original instances by ten times. In addition, to tailor the dataset for the TSDCMP, we included pertinent parameters pertaining to both the UAV and each AP, such as the UAV's battery energy and the AP's storage capacity. The dataset consists of instances with cluster-located APs (C), random-located APs (R), and random-cluster-located APs (RC). The number of APs belongs to $\{15,20,30,40\}$. Therefore, there are a total of 12 instances. Each instance is named as ``Type $|N|$", for example, ``C15" refers to the instances with fifteen cluster-located APs. The details of these instances can be obtained from \url{https://sites.google.com/view/dbjiao/homepage}.
\subsubsection{Compared Algorithms}
Due to the lack of existing algorithms for solving the TSDCMP, we compare the proposed algorithm \emph{ILS-MDP} with following two modified high-performance algorithms, namely \emph{ILS-MCS} \cite{yu2022team} and \emph{CHPBILM} \cite{JIAO2024108084}. These serve as benchmark algorithms within the same framework, differing only in allocation of arrival time and data collection duration. This enables their application to our problem, thereby verifying the effectiveness of the proposed model and algorithm in this work.
\begin{itemize}
  \item \emph{ILS-MCS} is a hybrid heuristic algorithm that integrates a modified coordinate search (MCS) into an iterated local search. Specifically, the iterated local search is adopted to construct and improve the routes, while the MCS is employed to determine the service time at each AP.
  \item \emph{CHPBILM} is a cooperative memetic algorithm that integrates that an efficient heuristic algorithm, variant population-based incremental learning (PBIL), and a maximum-residual-energy-based local search (MRELS) strategy. Specifically, the efficient heuristic algorithm is employed to improve the flight trajectory of the UAV, while the PBIL with MRELS is utilized to determine the number of restored unit circles for each degraded area on the grasslands.
\end{itemize}

It is worth noting that MCS optimizes continuous time variables in the original problem, while time is treated as a discrete variable in our scenario. To adapt MCS to our problem, we round down the results obtained by MCS to obtain a solution. In addition, we represent integer solutions directly in binary form in PBIL. We determine the optimal value for each binary bit through a search process, and ultimately convert the binary value to decimal to obtain the final solution.
\subsubsection{Performance Metrics}
We adopt the following four performance metrics to evaluate the effectiveness of the proposed model and algorithm.
\begin{itemize}
  \item \emph{\textbf{Objective function value}} is to maximize the total amount of data collected while minimizing the total amount of data overflowed for all APs by the UAV during the data collection tour, as defined by Eq. \eqref{maximizing-data-collection} in Section \ref{Problem-Formation}.
  \item \emph{\textbf{Data collection efficiency}} $\eta$ is the ratio of the total amount of data collected by the UAV, as defined by Eq. \eqref{total-amount-data-collection}, to the sum of both the total amount of data collected by the UAV and the amount of data overflow for all APs, as defined by Eq. \eqref{total-data-overflow}, during the data collection tour in Section \ref{Problem-Formation}. This metric is defined as follows:
      \begin{equation} \label{data-collection-efficiency}
      \begin{small}
      \eta = \frac{C}{C + O}.
      \end{small}
      \end{equation}
  \item \emph{\textbf{Ratio of data collection time}} $\mu$  is defined as the proportion of the total data collection time to the sum of the total flight time of the UAV and the data collection time. This metric is defined as follows:
      \begin{equation} \label{Ration-data-collection-time}
      \mu = \frac{T^c}{T^c + T^f}.
      \end{equation}
  \item \emph{\textbf{Algorithm runtime}} refers to the running time of each algorithm for each instance.
\end{itemize}
\subsubsection{Parameter Settings}
In the system model described in the Section \ref{System-Model}, the parameters of the UAV are set according to the values employed in \cite{zeng2019energy}: $P_o=79.85$, $U_{tip}=120$, $\mathrm{v}_0=4.03$, $d_0=0.6$, $\rho=1.225$, $s=0.05$, and $A=0.503$. Additionally, the UAV speed $\mathrm{v}$ is set to 10. The battery capacity carried by the UAV is set to 37000, 46000, 132000, and 141000 depending on different types of networks. The penalty parameter $\eta$ is set to 15 to determine the best-performing solutions, the data collection thresholds varies from 0.75 to 0.90 in increasing steps of 0.05 based on the size of the networks, and the signal-to-noise ratio range from 2.4 to 6.4. In addition, the other communication-related parameters are set as follows, consistent with \cite{samir2019uav}: $B = 1000$, $\beta = -50$, $\sigma^2 = -110$, $\tau = 1$, and $\tilde{H} = 1$. The transmit power $p$ of AP for uploading data link varies depending on different instances, with values ranging from $p \in (5.52,14.36)$.

We conducted preliminary experiments to fine-tune the hyper-parameter for all algorithms to determine the best-performing configurations. In ILS-MDP, the maximum number generation $T^{DMP}_{max}$ is set to $15$.  Moreover, in the comparison algorithms, the parameters of the ILS-MCS are set based on the values used in \cite{yu2022team}: $\alpha=10$, $\alpha_{min}=1$, and $\|\vec{e_i}\|=0.01$. In ILS-PBIL, the population size and maximum iteration number of PBIL are respectively set to $P=30$ and $T^{PBIL}_{max}=300$. The learning rate is set to  $\alpha=0.4$ and the proportion of elite solution $\theta$ is set to $33\%$. The PBIL algorithm utilizes binary encoding with 5 bits, indicating that the maximum data collection time for each AP is 31. The maximum execution time of all algorithms is limited to 1800 seconds.

In the simulations, each algorithm runs independently ten times in each scenario\footnote{In this work, the instance and scenario can be used interchangeably.} to minimize the influence of randomness on the test results as much as possible. The average values of these ten runs are taken as the simulation results, providing a more reliable assessment of the algorithm's performance.
\subsection{Simulation Results}
We first evaluate the performance of different algorithms by varying the number of APs from 15 to 40 across three different network types in Table \ref{Comparison-of-the-objective}. For solving the TSDCMP using ILS-MDP, ILS-MCS, and ILS-PBIL separately, we present the best, worst, and average objective function values; the standard deviation among ten runs; and the average total computing time in seconds for ten runs. Boldface is used to highlight the best value, as well as the shortest computing time among the different algorithms.
\begin{table*}[htbp]
\begin{small}
\renewcommand\arraystretch{1.2}
\centering
\begin{threeparttable}
\caption{Comparison of the objective function values among the three different algorithms for 12 instances.}
\label{Comparison-of-the-objective}
\begin{center}
\setlength{\tabcolsep}{0.10em}
\begin{tabular}{ccccccccccccccccccc}
\hline\hline%
\multirow{2}{*}{Instance}  &\multicolumn{1}{c}{}  &\multicolumn{5}{c}{ILS-MCS}   &\multicolumn{1}{c}{} &\multicolumn{5}{c}{ILS-PBIL}   &\multicolumn{1}{c}{} &\multicolumn{5}{c}{ILS-MDP}\\
\cline{3-7} \cline{9-13} \cline{15-19}
    &&Best&Worst&Avg. &Std.&Time~(s)   &&Best&Worst&Avg. &Std.&Time~(s)  &&Best&Worst&Avg. &Std.&Time~(s)\\
\hline
C15&&195667&195667&195667 &\textbf{0.00}&0.92&& 165701&143921&155556 &6091.06&1148.56&& \textbf{212973}&\textbf{212973}&\textbf{212973}&\textbf{0.00}&\textbf{0.76}\\
C20&&222053&222053&222053 &\textbf{0.00}&10.64&& 213080&186701&201146 &8216.21&2437.15&& \textbf{234621}&\textbf{234621}&\textbf{234621}&\textbf{0.00}&\textbf{3.01}\\
C30&&263381&263381&263381 &\textbf{0.00}&140.29&& 171183&159411&165999 &4304.71&5249.83&& \textbf{301694}&\textbf{301694}&\textbf{301694}&\textbf{0.00}&\textbf{27.66}\\
C40&&457536&457536&457536 &\textbf{0.00}&917.67&& 58404&58404&58404 &\textbf{0.00}&6184.15&& \textbf{523513}&\textbf{523513}&\textbf{523513}&\textbf{0.00}&\textbf{134.11}\\
R15&&7437&7437&7437 &\textbf{0.00}&4.82&& \textbf{27922}&5883&\textbf{12959} &7637.82&1887.31&& 10824&\textbf{10824}&10824&\textbf{0.00}&\textbf{0.95}\\
R20&&128470&128470&128470 &\textbf{0.00}&10.28&& 120169&96385&111683 &7144.84&2657.99&& \textbf{134937}&\textbf{134937}&\textbf{134937}&\textbf{0.00}&\textbf{3.09}\\
R30&&174107&174107&174107 &\textbf{0.00}&222.04&& -1123740&-1126140&-1124940 &120000&3022.29&& \textbf{219612}&\textbf{182080}&\textbf{193696}&13075.21&\textbf{49.27}\\
R40&&8636&-92528&-1474 &30352.93&936.94&& -1838772&-1838772&-1838772 &\textbf{0.00}&6522.08&& \textbf{199404}&\textbf{199404}&\textbf{199404}&\textbf{0.00}&\textbf{177.55}\\
RC15&&171462&171462&171462 &\textbf{0.00}&2.21&& 176951&162727&167699 &3639.13&1236.33&& \textbf{181299}&\textbf{175524}&\textbf{180723}&1735.50&\textbf{0.76}\\
RC20&&157095&157095&157095 &\textbf{0.00}&13.86&& 145139&128080&134852 &4709.21&1919.86&& \textbf{161469}&\textbf{161469}&\textbf{161469}&\textbf{0.00}&\textbf{2.09}\\
RC30&&92709&92709&92709 &\textbf{0.00}&126.13&& -244860&-258871&-251879 &7004.50&2018.17&& \textbf{153827}&\textbf{153827}&\textbf{153827}&\textbf{0.00}&\textbf{34.27}\\
RC40&&15871&15871&15871 &\textbf{0.00}&390.59&& -151144&-151144&-151144 &\textbf{0.00}&7025.09&& \textbf{445480}&\textbf{445480}&\textbf{445480}&\textbf{0.00}&\textbf{126.43}\\
\hline\hline
\end{tabular}
\raggedright
\begin{tablenotes}
\item[1] Standard deviation is abbreviated as ``Std.", and average value is abbreviated as ``Avg."
\item[2] The algorithm running time is the average of 10 runs, rounded to two decimal places.
\item[3] All average values are rounded.
\end{tablenotes}
\vspace{5pt}
\end{center}
\end{threeparttable}
\end{small}
\end{table*}
\begin{figure*}[htb]
    \centering
    \subfigure[ILS-MCS]{
        \includegraphics[width=2.1in]{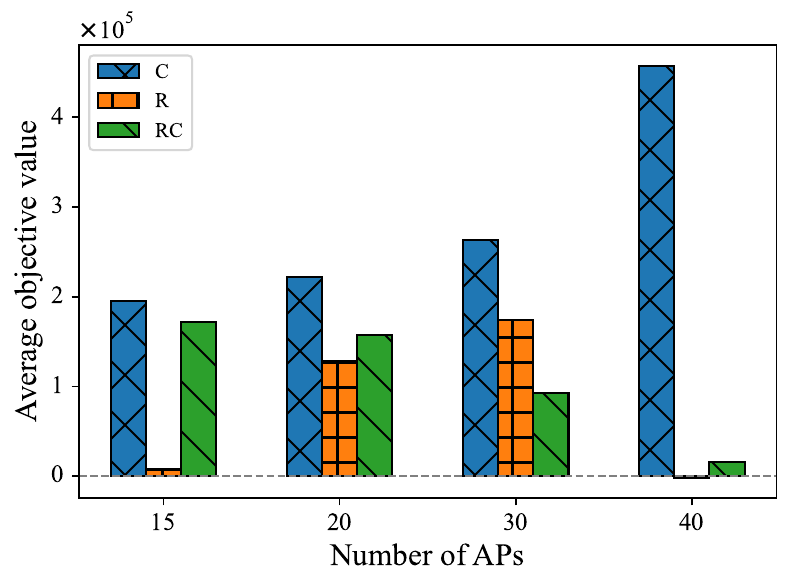}
    }
    \subfigure[ILS-PBIL]{
        \includegraphics[width=2.2in]{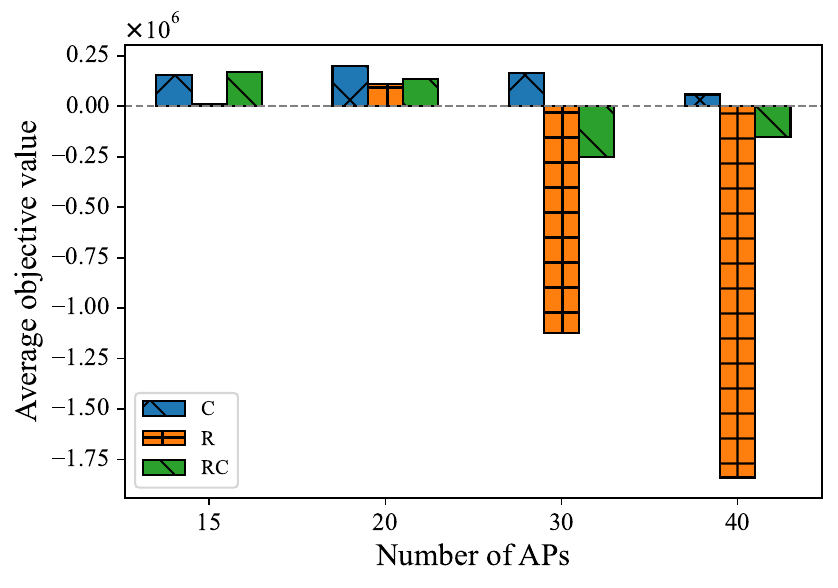}
    }
    \subfigure[ILS-MDP]{
        \includegraphics[width=2.1in]{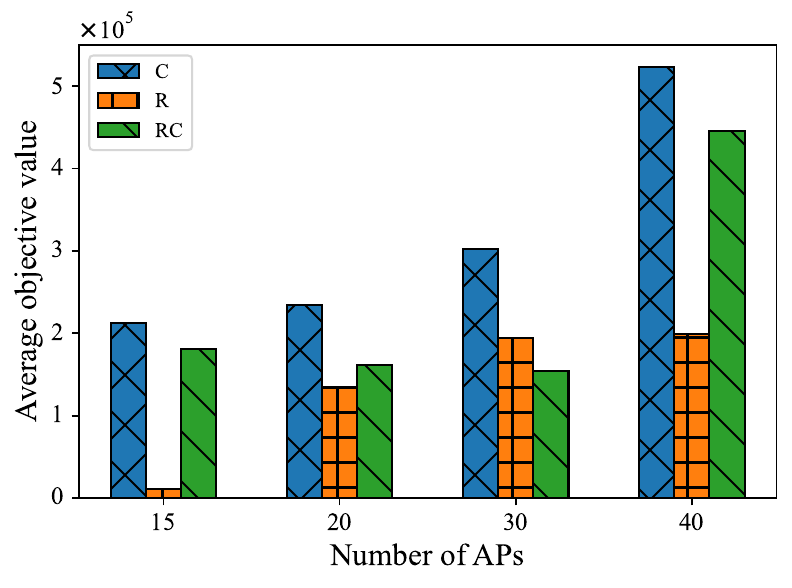}
    }
\caption{Comparison of the average objective function value of different algorithms by varying the number of APs from 15 to 40 across three different network types.} \label{Objective-value-Network-Type}
\end{figure*}

Table \ref{Comparison-of-the-objective} illustrates that the objective function value increases with the number of APs for each of the three algorithms across different network types. Meanwhile, from Fig.~\ref{Objective-value-Network-Type}, we observe that in networks with the same number of APs, the average objective function value of C-type networks is typically much larger than that of other types, while the average objective function value of R-type and CR-type networks depends on the distribution of AP locations. The reason behind this phenomenon is that the UAV in the solution provided by the different algorithms is more inclined to collect data from clusters with many APs in C-type networks, considering the temporal-spatial correlations during APs' data collection. Consequently, more energy can be dedicated to data collection, resulting in an increase in the objective function value. Conversely, in the other two types of networks, the relatively scattered distribution of AP locations leads to increased energy consumption for UAV flying between different APs. In some cases, APs may even experience data overflow due to delayed data collection, resulting in objective function values that are not as large as those observed in C-type networks. 

As the UAV's energy increases proportionally, the presence of more APs and higher network density leads to increased data collection by the UAV. Consequently, data overflow decreases, resulting in a larger objective function value. Moreover, as observed in Table \ref{Comparison-of-the-objective}, the proposed algorithm ILS-MDP demonstrates significantly superior performance compared with the other two algorithms in terms of the best value, worst value, and average value, except for instance R15. In addition, from the results shown in Table \ref{Comparison-of-the-objective}, it can be seen that the ILS-MDP could generally find better solutions than the other two algorithms within considerably less computing time and also demonstrates good stability. Furthermore, from Table \ref{Comparison-of-the-objective} and Fig.~\ref{Objective-value-Network-Type}, we observed that when employing algorithms ILS-MCS and ILS-PBIL to tackle the large-scale R30 and R40 instances, the objective function values appeared to be negative. This occurred because within the maximal computing time, these two algorithms, especially ILS-PBIL, converge slowly and  prompt premature termination of the iteration process, failing to produce high-quality solutions.
\begin{table*}[htbp]
\renewcommand\arraystretch{1.2}
\centering
\begin{threeparttable}
\center
\caption{Comparison of the UAV data collection efficiency among the three different algorithms for 12 instances.}
\label{UAV-data-collection-efficiency}
\setlength{\tabcolsep}{0.95em}
\begin{tabular}{ccccccccccccc}
\hline\hline
\multirow{2}{*}{Instance}&\multicolumn{1}{c}{}&\multicolumn{3}{c}{ILS-MCS}&\multicolumn{1}{c}{}&\multicolumn{3}{c}{ILS-PBIL}&\multicolumn{1}{c}{}&\multicolumn{3}{c}{ILS-MDP}\\
\cline{3-5}\cline{7-9}\cline{11-13}
&&$C$&$O$&$\eta$&&$C$&$O$&$\eta$&&$C$&$O$&$\eta$\\
\hline
C15&&238987&2888&\textbf{0.988}&&209715&2934&0.986&&261528&3237&0.987\\
C20&& 222053 & 0 & \textbf{1.000}  && 213080  & 0 & \textbf{1.000} && 236046  & 95 & 0.999\\
C30&& 353951 & 6038 & \textbf{0.983} && 319728 & 9903 & 0.970 && 418499 & 7787 & \textbf{0.981} \\
C40&& 585591  & 8537 & 0.986 && 387924 & 21968 & 0.946 && 627448 & 6929 & \textbf{0.989}\\
R15&& 203697 & 13084 & 0.940&& 212872 & 12330 & \textbf{0.945}  && 210984 & 13344 & 0.941 \\
R20&& 260635  & 8811 & 0.967 && 268309  & 9876 & 0.964 && 251262  & 7755 & \textbf{0.970} \\
R30 && 394397 & 14686 & 0.964 && 281806 & 93863 & 0.750 && 336297 & 7779 & \textbf{0.977}\\
R40&& 501761 & 32875 & 0.939  && 311058 & 143322 & 0.684 && 451692 & 16819 & \textbf{0.964} \\
RC15&& 192072 & 1374 & 0.992 && 198221 & 1418 & 0.992 && 200079 & 1252 & \textbf{0.993} \\
RC20&& 197535  & 2696 & 0.987 && 182984  & 2523 & 0.986 && 197889  & 2428 & \textbf{0.989}\\
RC30&& 409809 & 21140 & 0.951 && 390825 & 42379 & 0.902 && 364292 & 14031 & \textbf{0.963}\\
RC40&& 582406  & 37769 & 0.939 && 587666  & 49254 & 0.923 && 585220 & 9316 & \textbf{0.984}\\
\hline\hline
\end{tabular}
\begin{tablenotes}
\item[1] The data collection efficiency $\eta$ is the average of 10 runs, rounded to three decimal places.
\end{tablenotes}
\vspace{5pt}
\end{threeparttable}
\end{table*}

We then assess the algorithms' performance by calculating the data collection efficiency of UAV. Table \ref{UAV-data-collection-efficiency} showcases the efficiency of UAV data collection for each instance. It is evident that the data collection efficiency of UAV, when using the proposed ILS-MDP algorithm, exceeds that of the other two algorithms in over $75\%$ of the instances. This implies that the proposed ILS-MDP algorithm can effectively utilize the temporal-spatial correlations among APs for data collection. Consequently, the UAV's flight energy consumption is reduced, which in turn leaves more energy available for collect data.

Moreover, we also investigate the algorithms' performance by examining the ratio of data collection time. As observed in Table \ref{Ratio-of-effective-data-collection-time}, the proposed ILS-MDP algorithm shows a slight advantage over ILS-MCS algorithm in terms of ratio of data collection time, despite a slight performance gap in some instances. However, when compared with the ILS-PBIL algorithm, the advantage is notably significant. Moreover, when the UAV achieves maximum the amount of data collection during a data collection process, specifically when the best solution is reached, the best UAV path determined by each algorithm for each instance is depicted in Fig. \ref{Visualization-Network-Type}.
\begin{figure*}[htbp]
    \centering
    \subfigure[C15]{
        \includegraphics[width=2.1in]{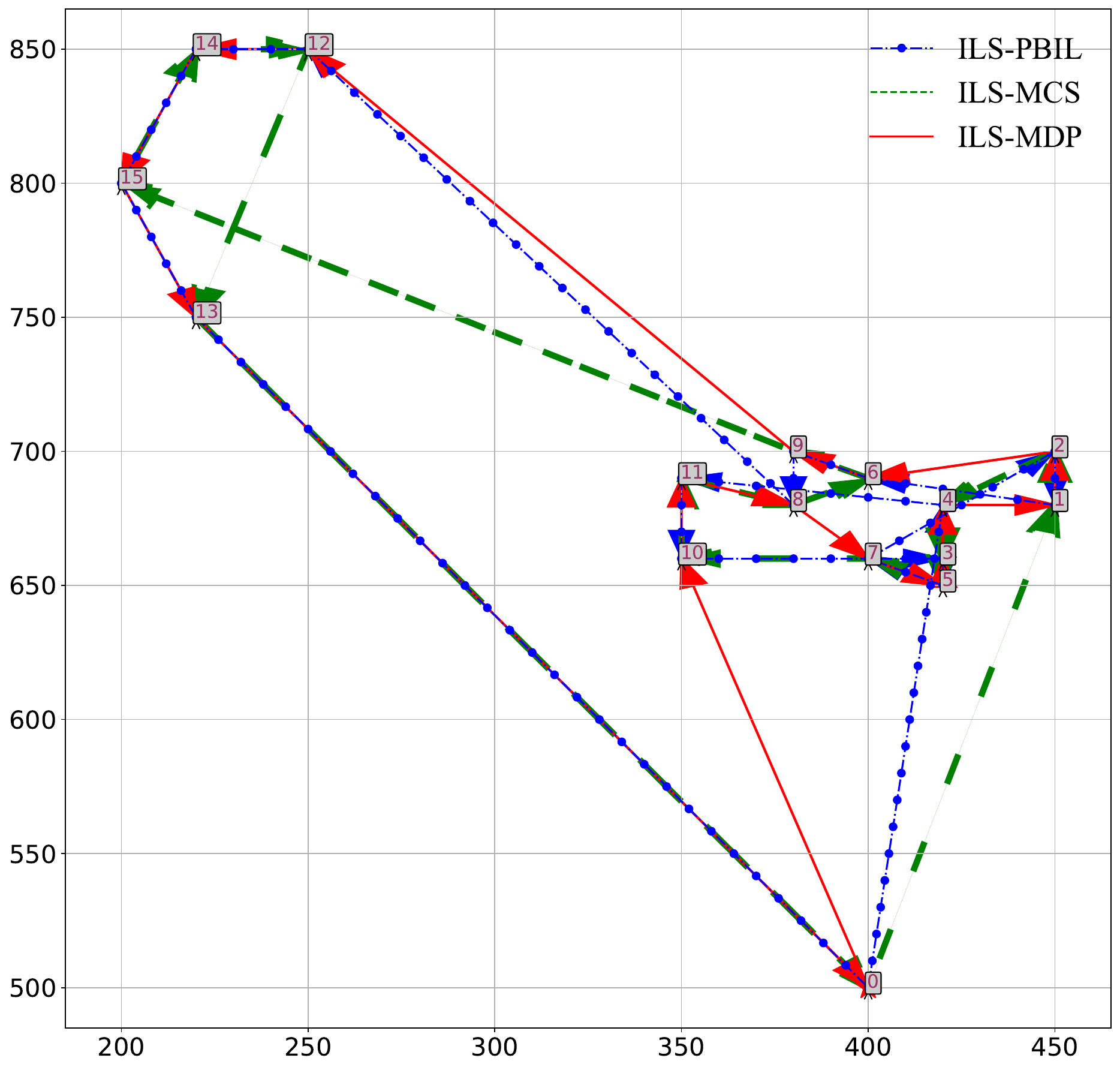}
    }
    \subfigure[R15]{
        \includegraphics[width=2.1in]{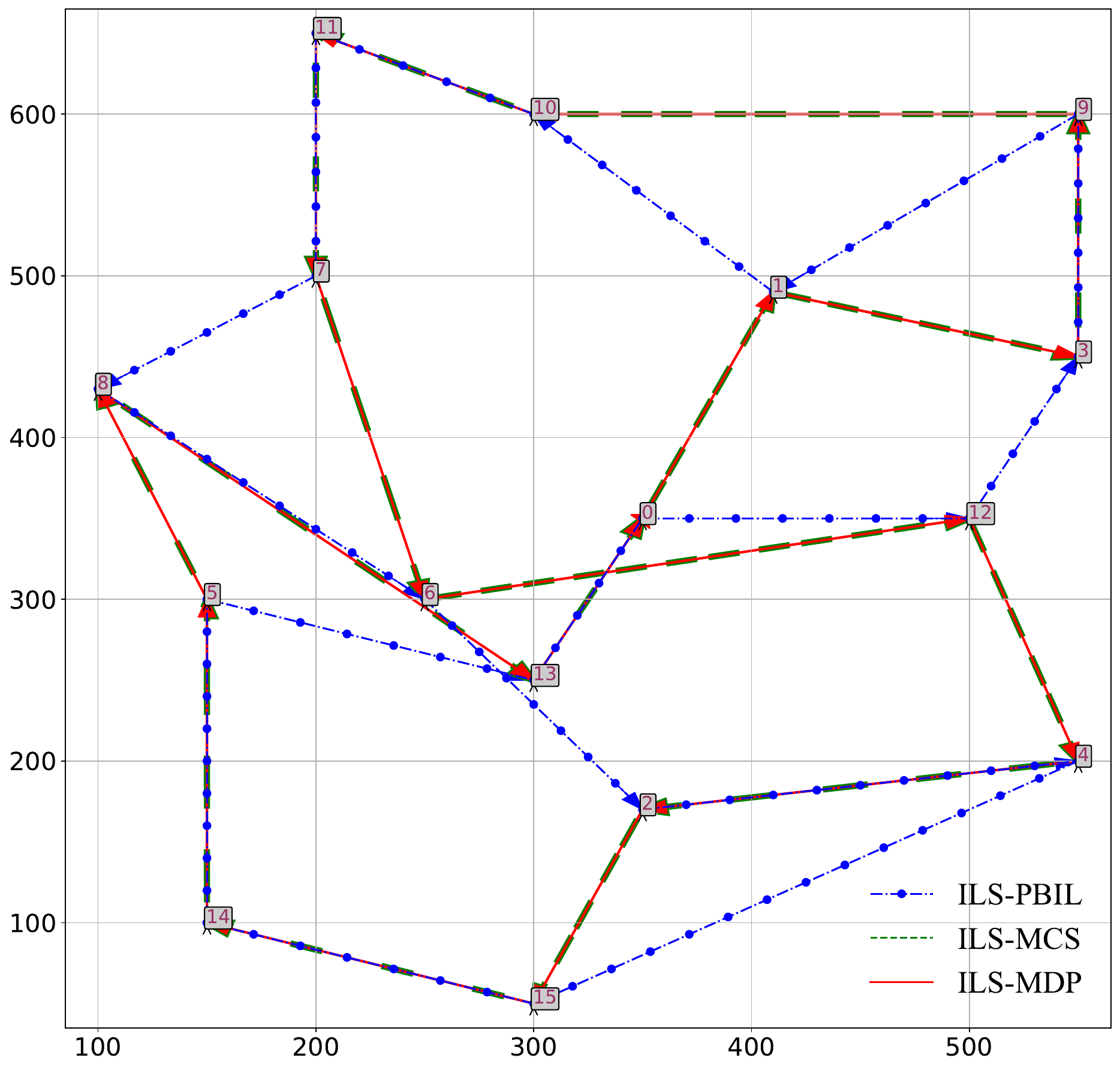}
    }
    \subfigure[RC15]{
        \includegraphics[width=2.1in]{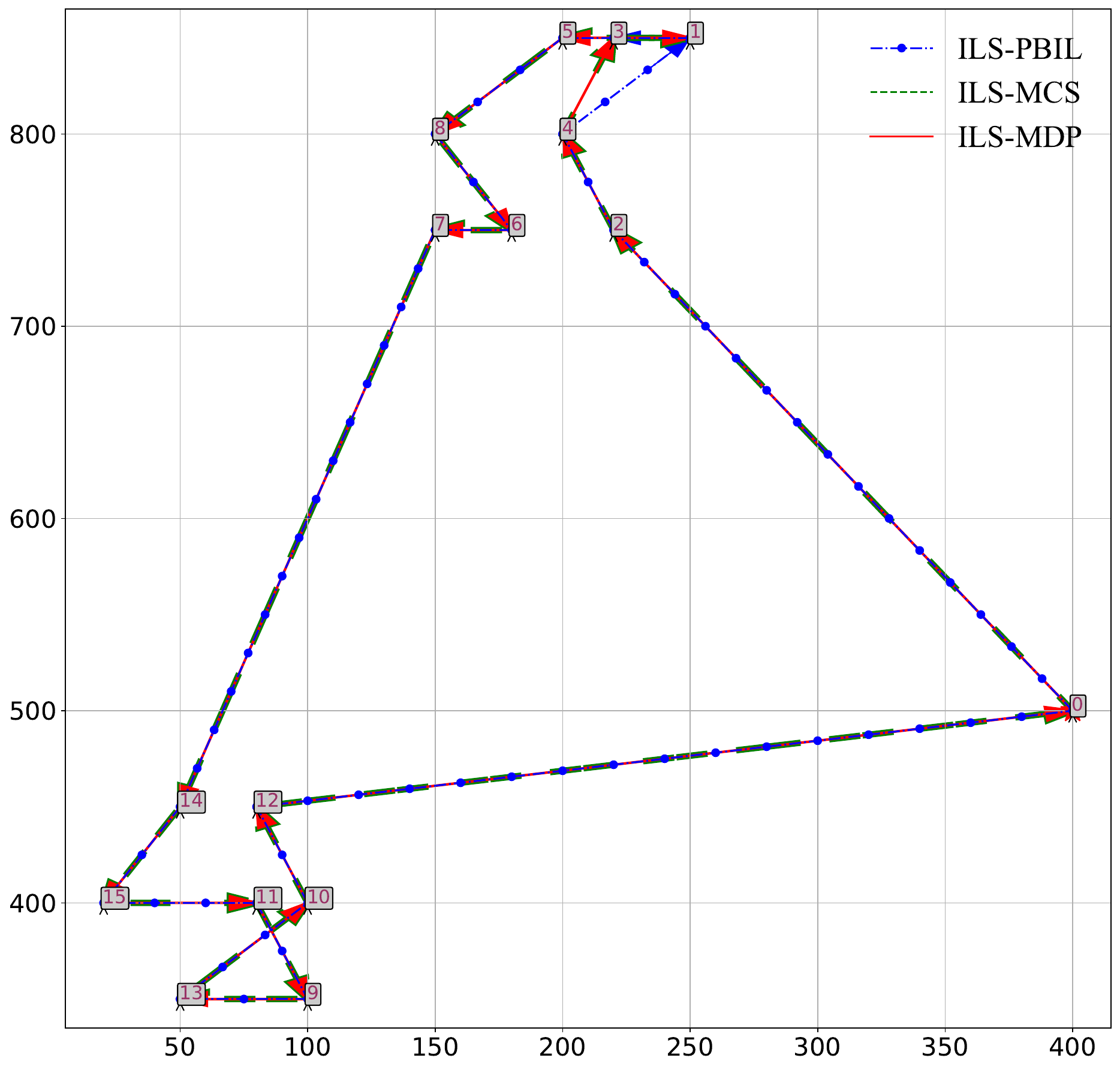}
    }
    \subfigure[C20]{
        \includegraphics[width=2.1in]{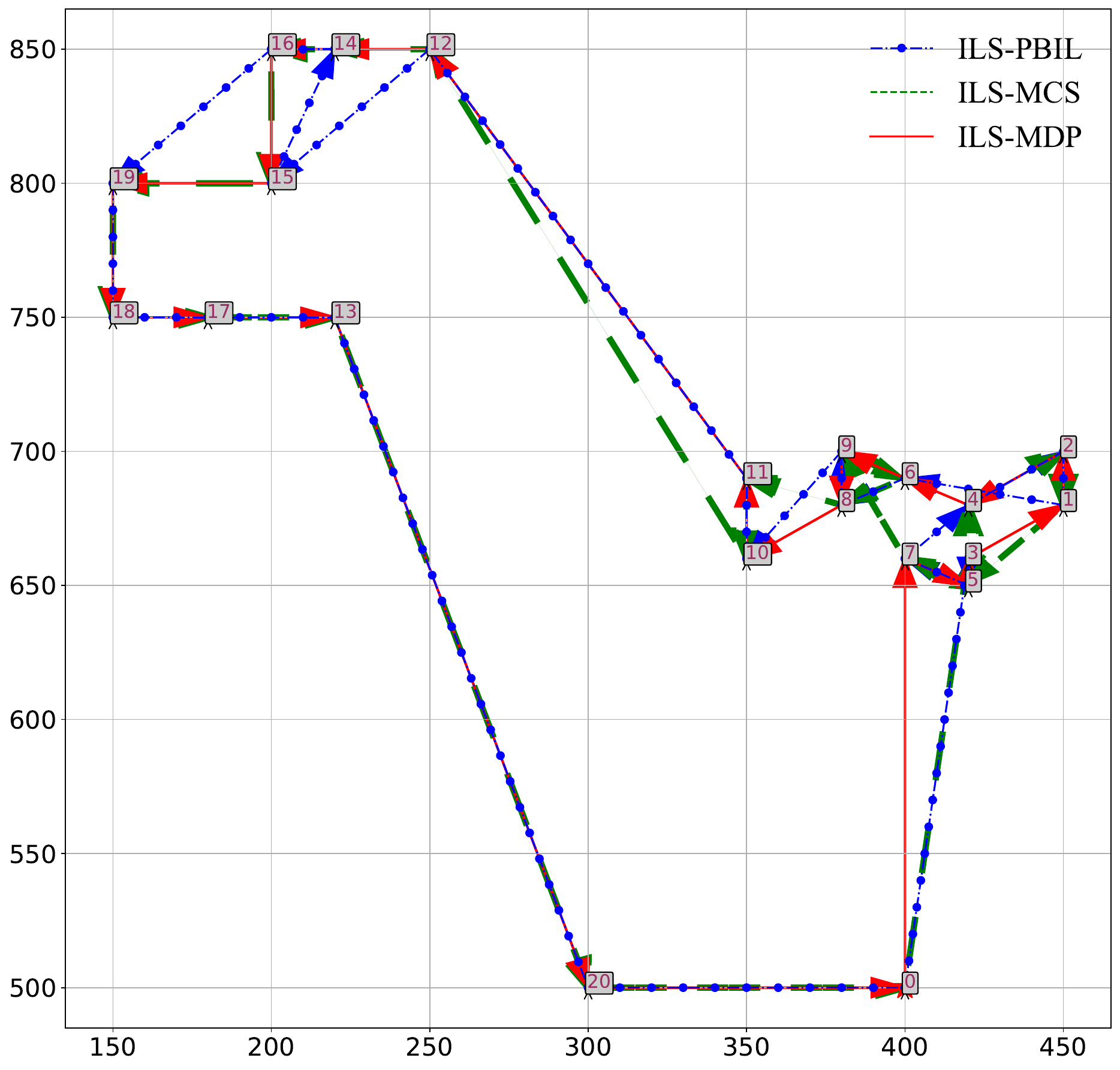}
    }
    \subfigure[R20]{
        \includegraphics[width=2.1in]{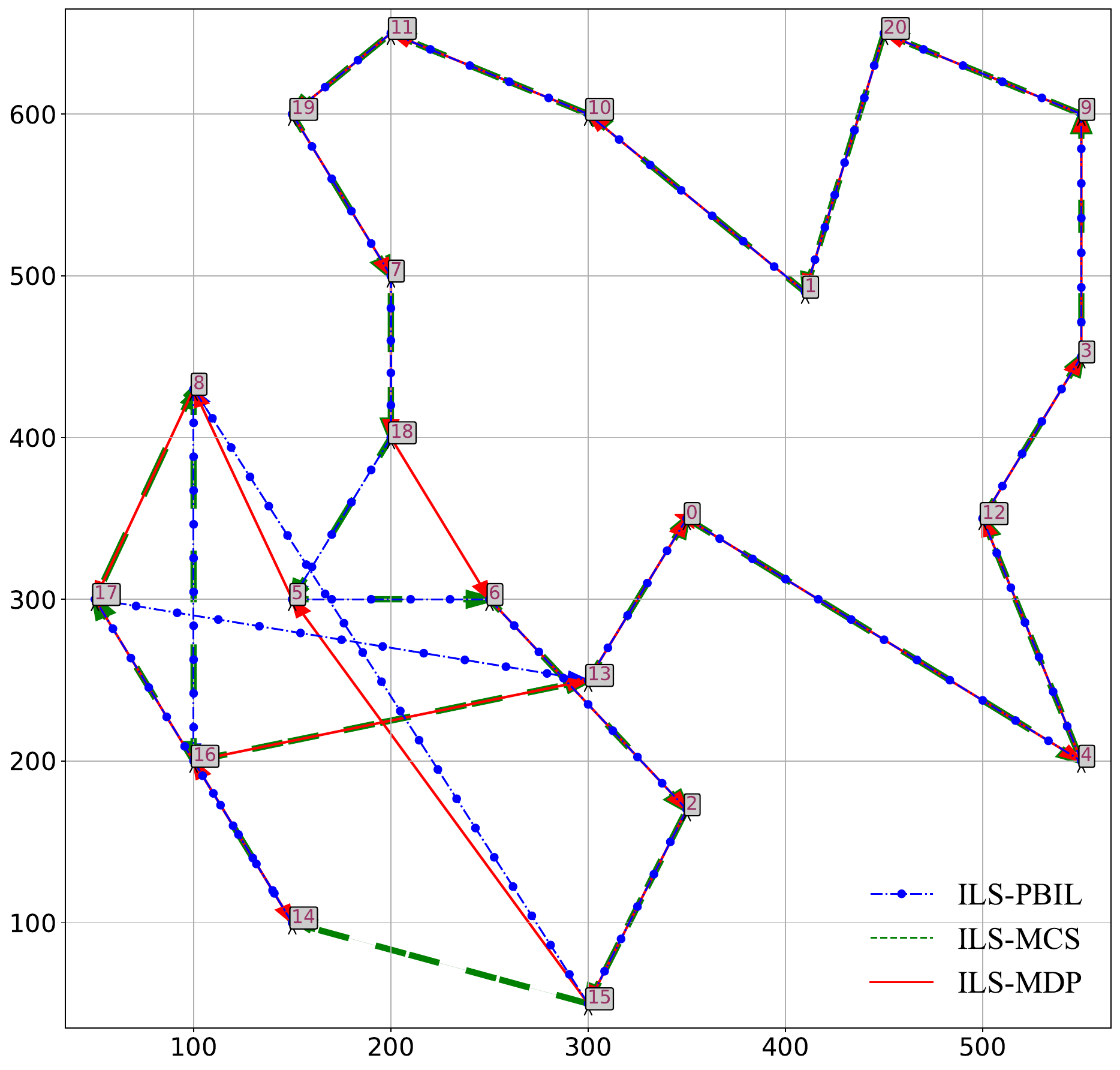}
    }
    \subfigure[RC20]{
        \includegraphics[width=2.1in]{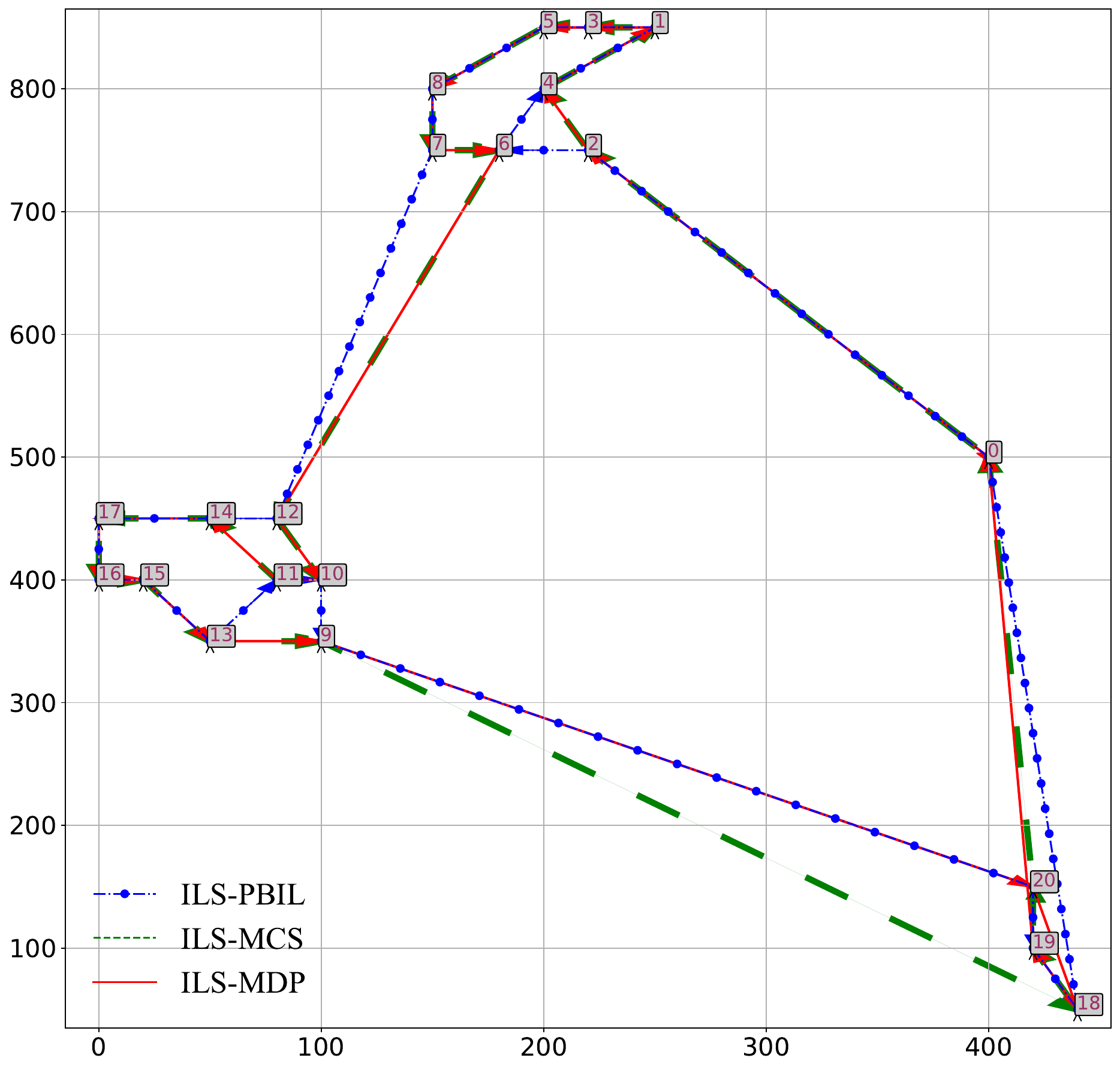}
    }
    \subfigure[C30]{
        \includegraphics[width=2.1in]{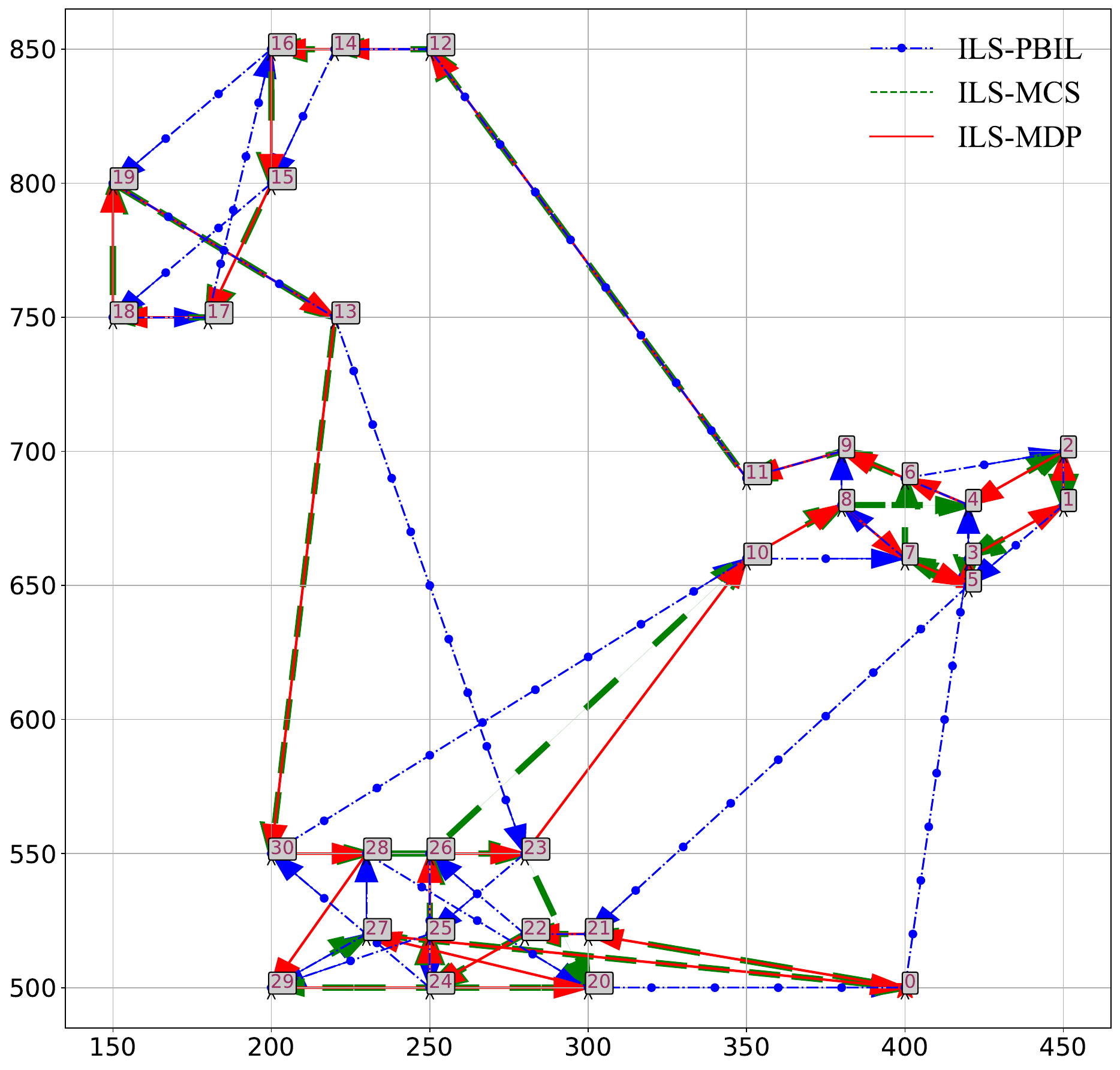}
    }
    \subfigure[R30]{
        \includegraphics[width=2.1in]{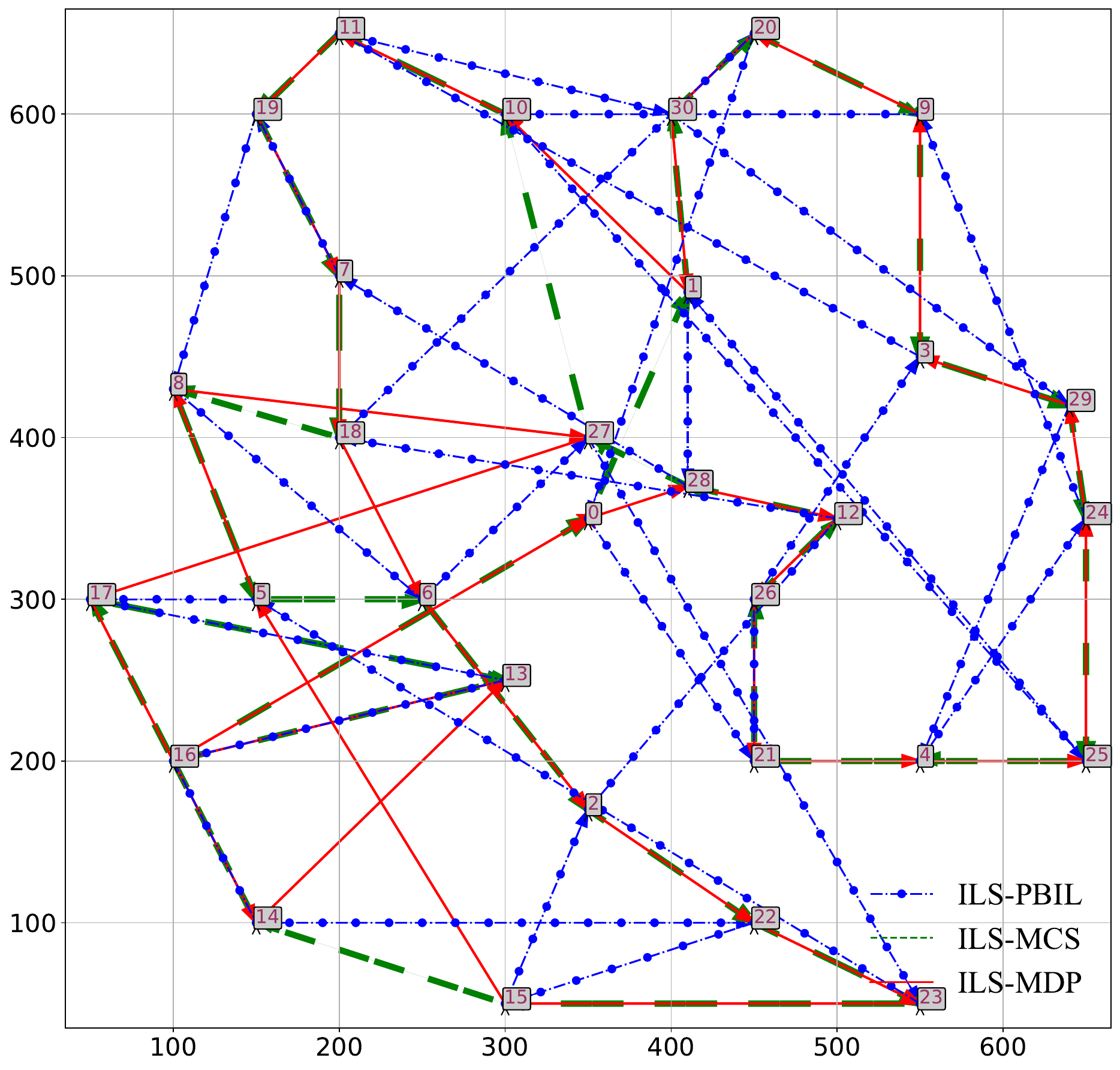}
    }
    \subfigure[RC30]{
        \includegraphics[width=2.1in]{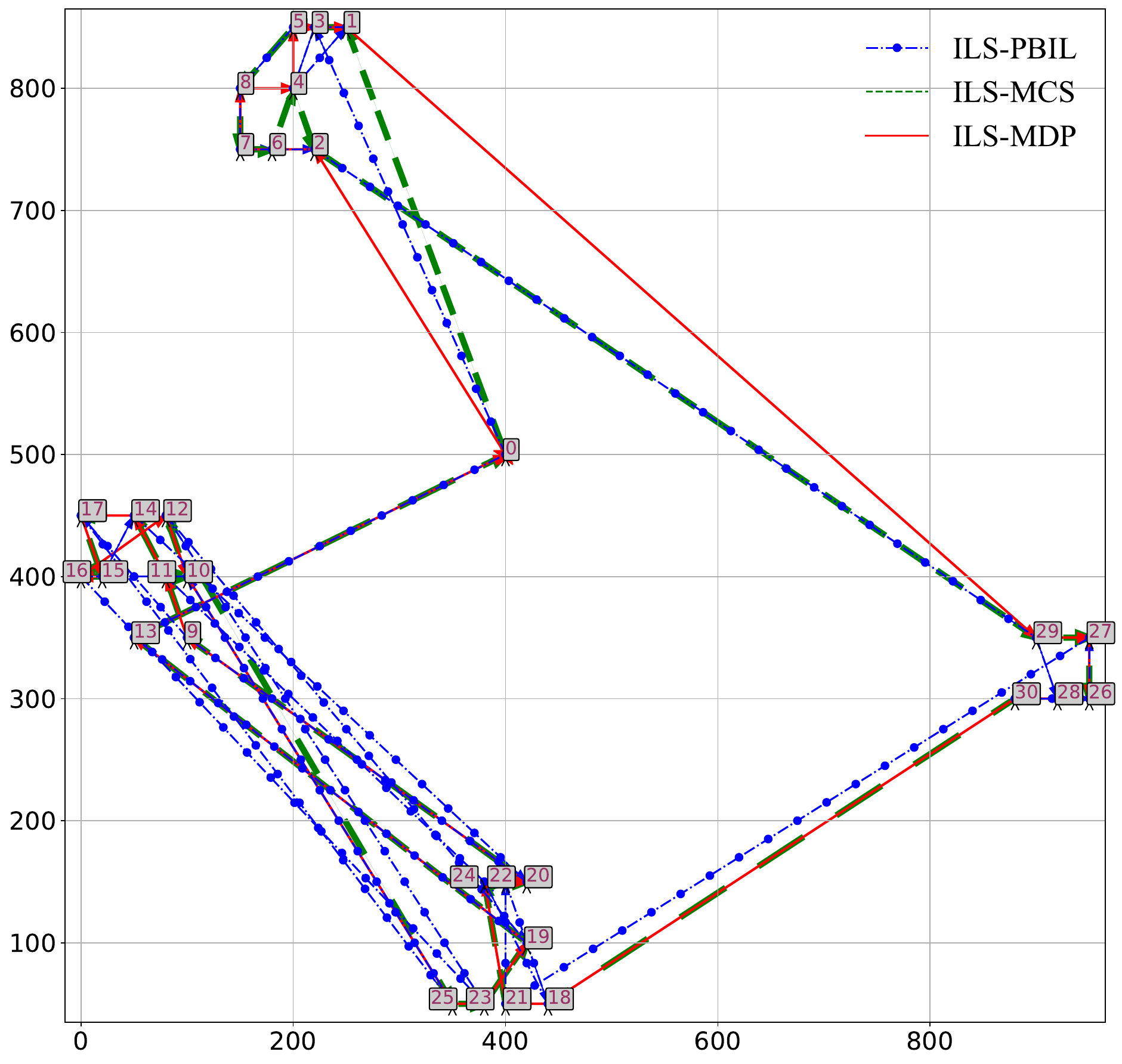}
    }
    \subfigure[C40]{
        \includegraphics[width=2.1in]{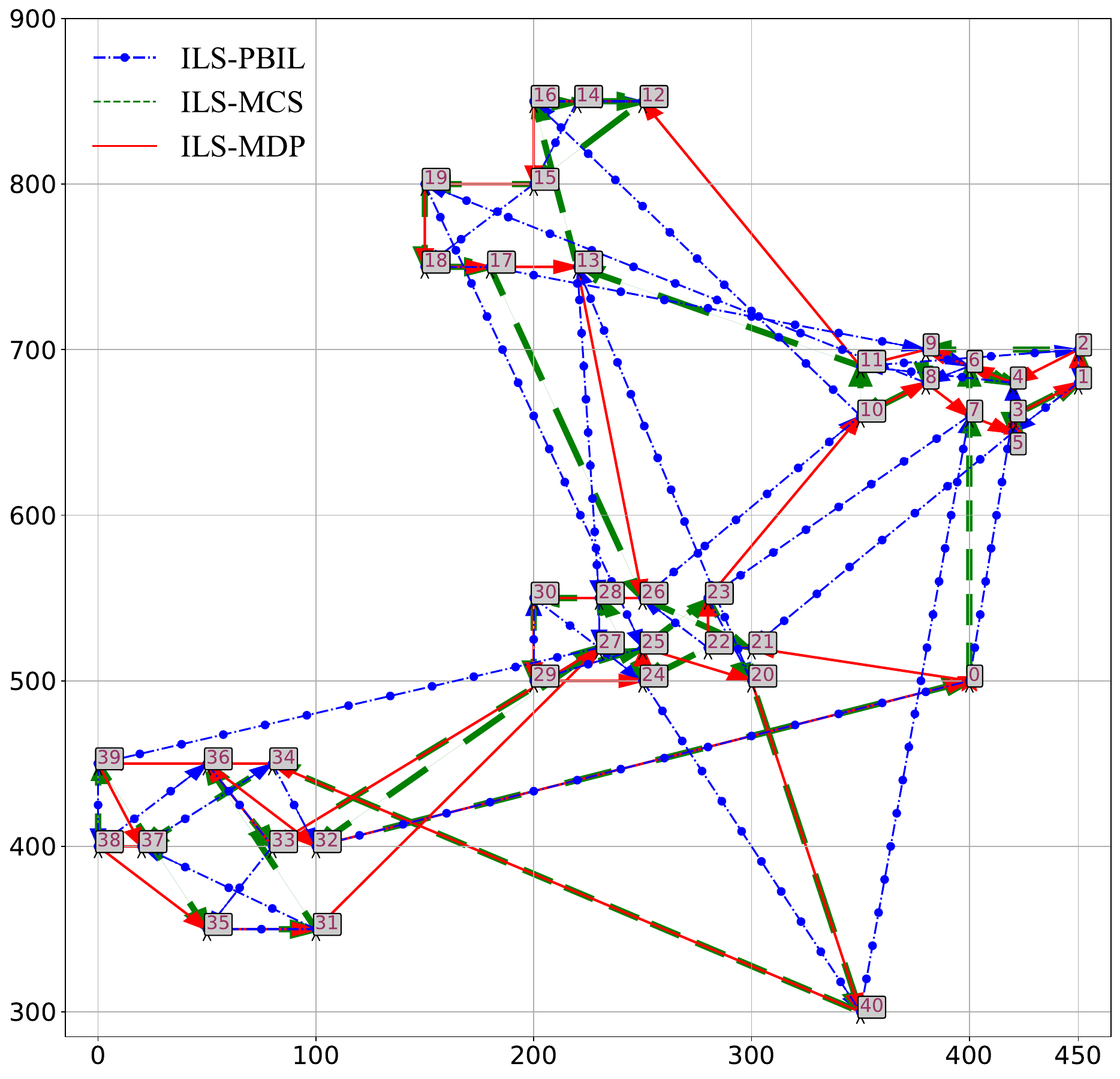}
    }
    \subfigure[R40]{
        \includegraphics[width=2.1in]{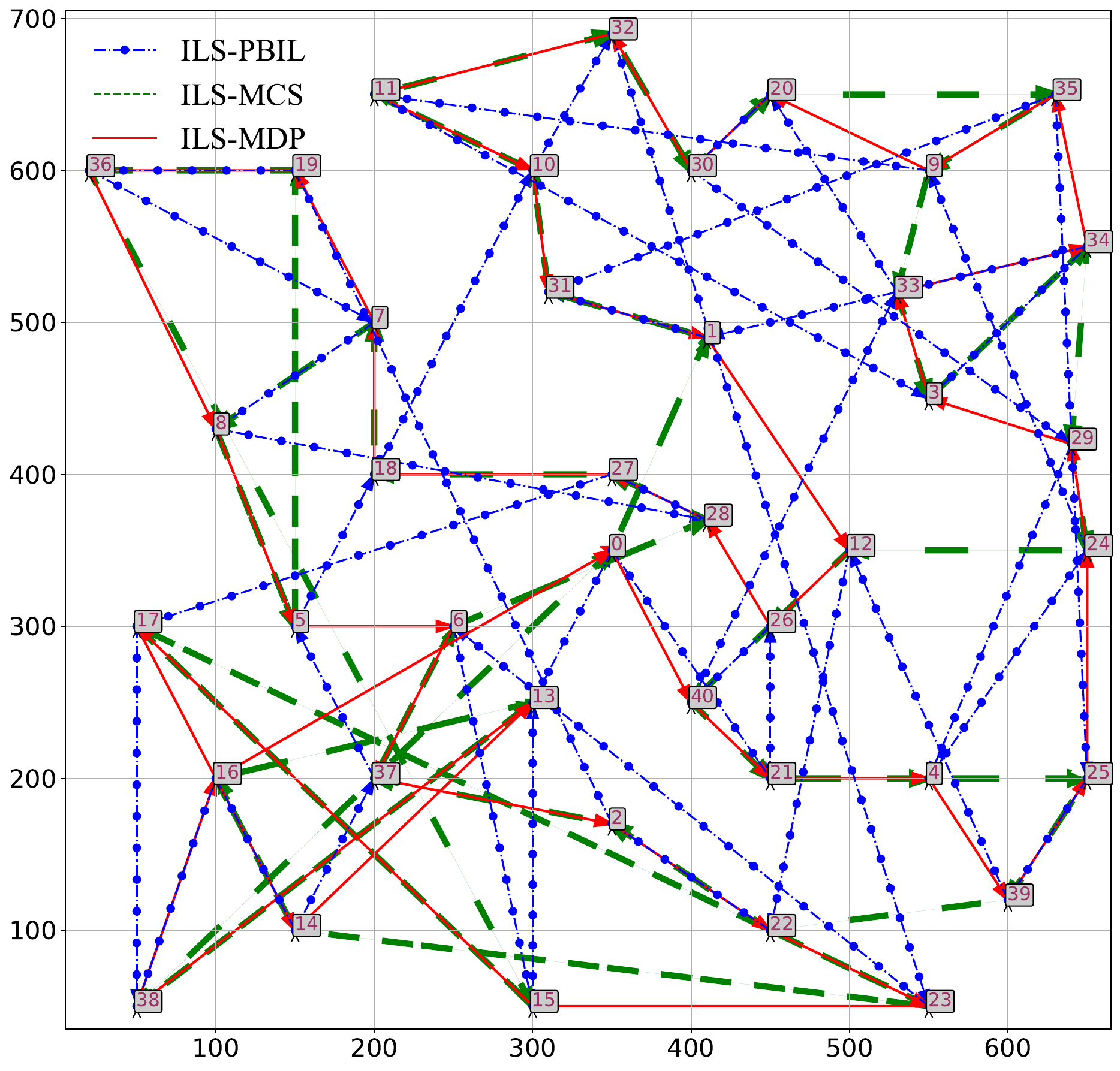}
    }
    \subfigure[RC40]{
        \includegraphics[width=2.1in]{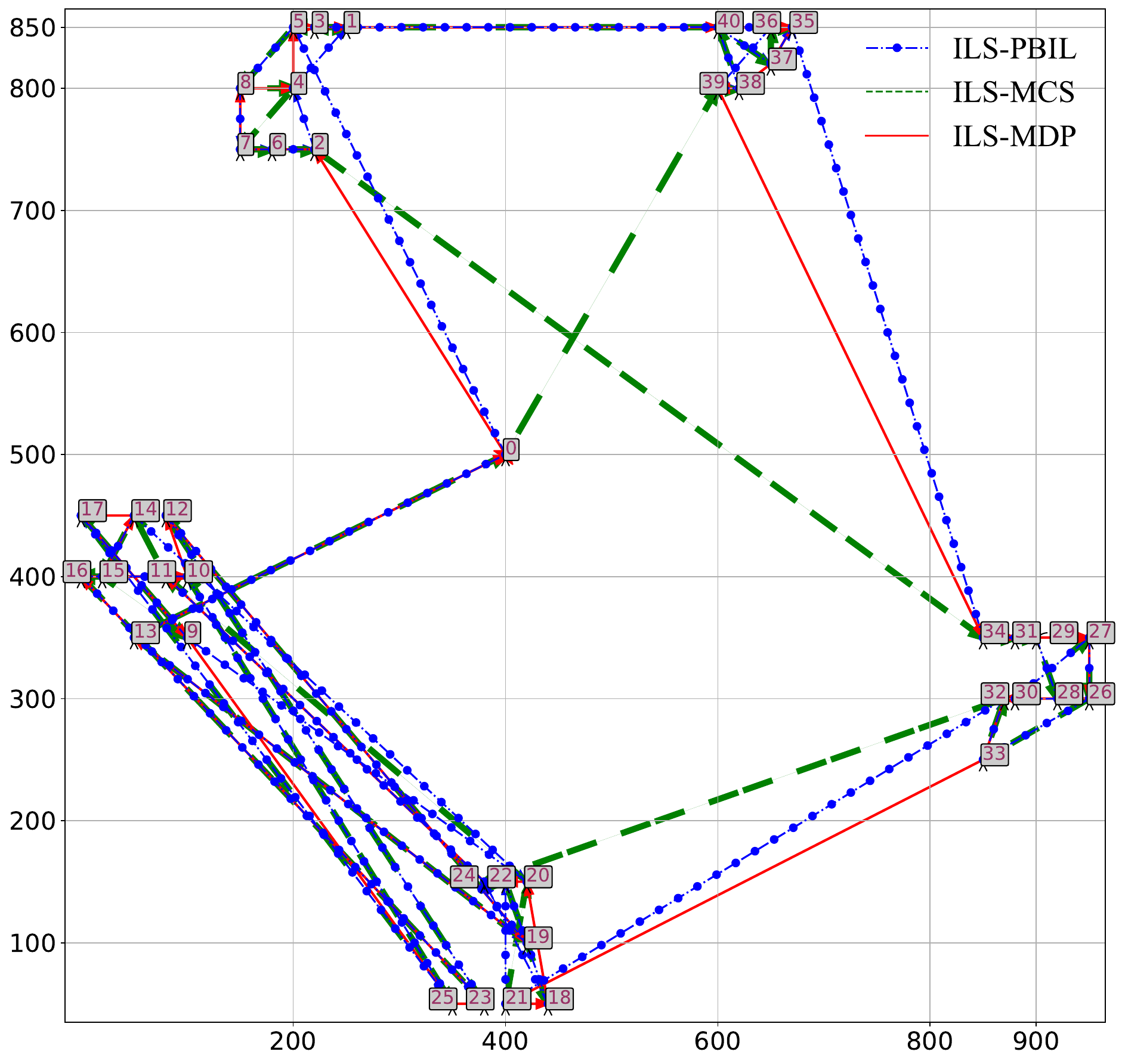}
    }
\caption{Visualization of the best paths generated by three different algorithms in three different network scenarios for 12 instances.} \label{Visualization-Network-Type}
\end{figure*}
\begin{table*}[htbp]
\renewcommand\arraystretch{1.2}
\centering
\begin{threeparttable}
\center
\caption{Ratio of effective data collection time among the three different algorithms for 12 instances.}
\label{Ratio-of-effective-data-collection-time}
\small
\setlength{\tabcolsep}{0.75em}
\begin{tabular}{ccccccccccccc}
\hline\hline
\multirow{2}{*}{Instance}&\multicolumn{1}{c}{}&\multicolumn{3}{c}{ILS-MCS}&\multicolumn{1}{c}{}&\multicolumn{3}{c}{ILS-PBIL}&\multicolumn{1}{c}{}&\multicolumn{3}{c}{ILS-MDP}\\
\cline{3-5}\cline{7-9}\cline{11-13}
&&Flight&Service&$\mu$&&Flight&Service&$\mu$&&Flight&Service&$\mu$\\
\hline
C15&&115 & 85 &0.425&& 122 & 73 &0.374&& 107 & 91&\textbf{0.459} \\
C20&&130 & 80 &0.381&& 133 & 78 &0.370&& 123 & 85&\textbf{0.410}\\
C30&& 179 & 143 &0.444&& 218 & 129 &0.372&& 177 & 170&\textbf{0.490}\\
C40&& 325 & 258 &0.443&& 462 & 170 & 0.269&&307 & 277&\textbf{0.474}\\
R15&& 278 & 110 &0.284 &&261 & 113 &\textbf{0.302}& &278 & 112&0.287\\
R20&& 307 & 152 &\textbf{0.331} &&336 & 157 & 0.318&&312 & 142&0.312\\
R30 && 416 & 216 &\textbf{0.342}&& 755 & 153 &0.169&& 452 & 183&0.288\\
R40&& 692 & 263 &0.276 &&907 & 158 &0.148 &&552 & 234&\textbf{0.298}\\
RC15&& 164 &  96 & \textbf{0.369}&&162 & 97 & 0.374&&164 &  96&\textbf{0.369}\\
RC20&& 221 & 106 &\textbf{0.324} &&222 & 97 &0.304 &&224 & 105&0.319\\
RC30&& 450 & 214 &\textbf{0.322} &&760 & 197 &0.206& &436& 186&0.300\\
RC40&& 747 & 256 & 0.255&&814 & 247 & 0.232&&566 & 254&\textbf{0.310}\\
\hline\hline
\end{tabular}
\begin{tablenotes}
\item[1] The ratio of data collection time $\mu$ is the average of 10 runs, rounded to three decimal places.
\end{tablenotes}
\vspace{5pt}
\end{threeparttable}
\end{table*}

Finally, to emphasize the advantages of our proposed algorithm and model, and enhance the credibility of the experimental results, we perform a statistical test. All comparisons are validated by using the nonparametric Wilcoxon rank-sum test with significance level of $p=0.05$. The statistical outcomes are detailed in Table \ref{Statistical-testing}. From Table \ref{Statistical-testing}, it is evident that the $p-$values are all less than 0.05 except for the instances of R15 with the ILS-PBIL algorithm. These statistical results suggest that the observed differences are statistically significant.
\begin{table*}[htbp]
\renewcommand\arraystretch{1.2}
\centering
\begin{threeparttable}
\caption{Statistical testing of the performance among the three different algorithms for 12 instances.}
\label{Statistical-testing}
\begin{center}
\setlength{\tabcolsep}{0.30em}
\begin{tabular}{cccccccccccccc}
\hline\hline%
\multirow{2}{*}{Algorithm pairs}  &\multicolumn{1}{c}{}  &\multicolumn{12}{c}{$p$-value}\\
\cline{3-14}
&&C15&C20&C30&C40&R15&R20&R30&40&RC15&RC20&RC30&RC40\\
\hline
ILS-MDP $vs.$ ILS-PBIL&&0.0000&0.0000&0.0000&0.0000&0.443($l$)&0.0000&0.0001&0.0000&0.0001&0.0000&0.0000&0.0000\\
ILS-MDP $vs.$ ILS-MCS&&0.0000&0.0000&0.0000&0.0000&0.0000&0.0000&0.0000&0.0000&0.0000&0.0000&0.0000&0.0000\\
\hline\hline
\end{tabular}
\begin{tablenotes}
\item[1] The Wilcoxon rank-sum test is conducted with significance level $p$ is $0.05$, and $l$ indicates that the statistical result is nonsignificant.
\end{tablenotes}
\vspace{5pt}
\end{center}
\end{threeparttable}
\end{table*}

\section{Conclusion} \label{Conclusion}
This work addressed a variant of TOP with time-varying profits for the TSDCMP within GMENs during a data collection tour. It represents the first research effort to solve this problem by jointly optimizing the flight trajectory, arrival time, and data collection duration for the UAV at each AP. The goal was to determine how the UAV manages its limited energy, coupled routing and time scheduling, and the state of APs and UAV arrival time to minimize the volume of overflowed data while maximizing the volume of data collected for APs. We formulated the TSDCMP for UAV-enabled GMENs as a  mixed-integer programming model in a single trip under the realistic constraints. To address this problem, we developed a novel CHTSC algorithm, which integrates a modified MDP into an iterated local search. This algorithm fully considers the temporal and spatial relationships between consecutive service requests from APs. The simulation results indicated that the mixed-integer programming model accurately represents the characteristics of the TSDCMP within UAV-enabled GMENs. Furthermore, the proposed CHTSC algorithm outperformed two superior algorithms across twelve different scale instances.

Future research directions include exploring collaborative data collection schemes using multiple UAVs to overcome the limitations of a single UAV's data collection capabilities. In addition, expanding the study to incorporate autonomous decision-making framework that accounts for variable UAV altitudes in complex grassland terrains is essential. What is more, the model and solution framework developed here can be extended to other domains, such as emergency rescue operations.


%

%

\section*{Acknowledgment}
This work was supported in part by National Natural Science Foundation of China~(Grant No. 62233003, 62176109, U21A20183, and 62272210), the Fundamental Research Funds for the Central Universities (Grant No. lzujbky-2021-2 and CHD300102223109), and the Natural Science Foundation of Gansu Province, China~(Grant No. 20JR10RA640).

The authors would like to thank Mr. Z. Guo and Dr. L. Yuan for kind help and valuable discussions. We also thank the editors and anonymous referees for their insightful comments and helpful suggestions which significantly improve the manuscript's quality.

\ifCLASSOPTIONcaptionsoff
  \newpage
\fi



%
%
%

\bibliographystyle{IEEEtran}

\bibliography{Reference-UAV}

%
%
\begin{IEEEbiography}[{\includegraphics[width=1in,height=1.25in,clip,keepaspectratio]{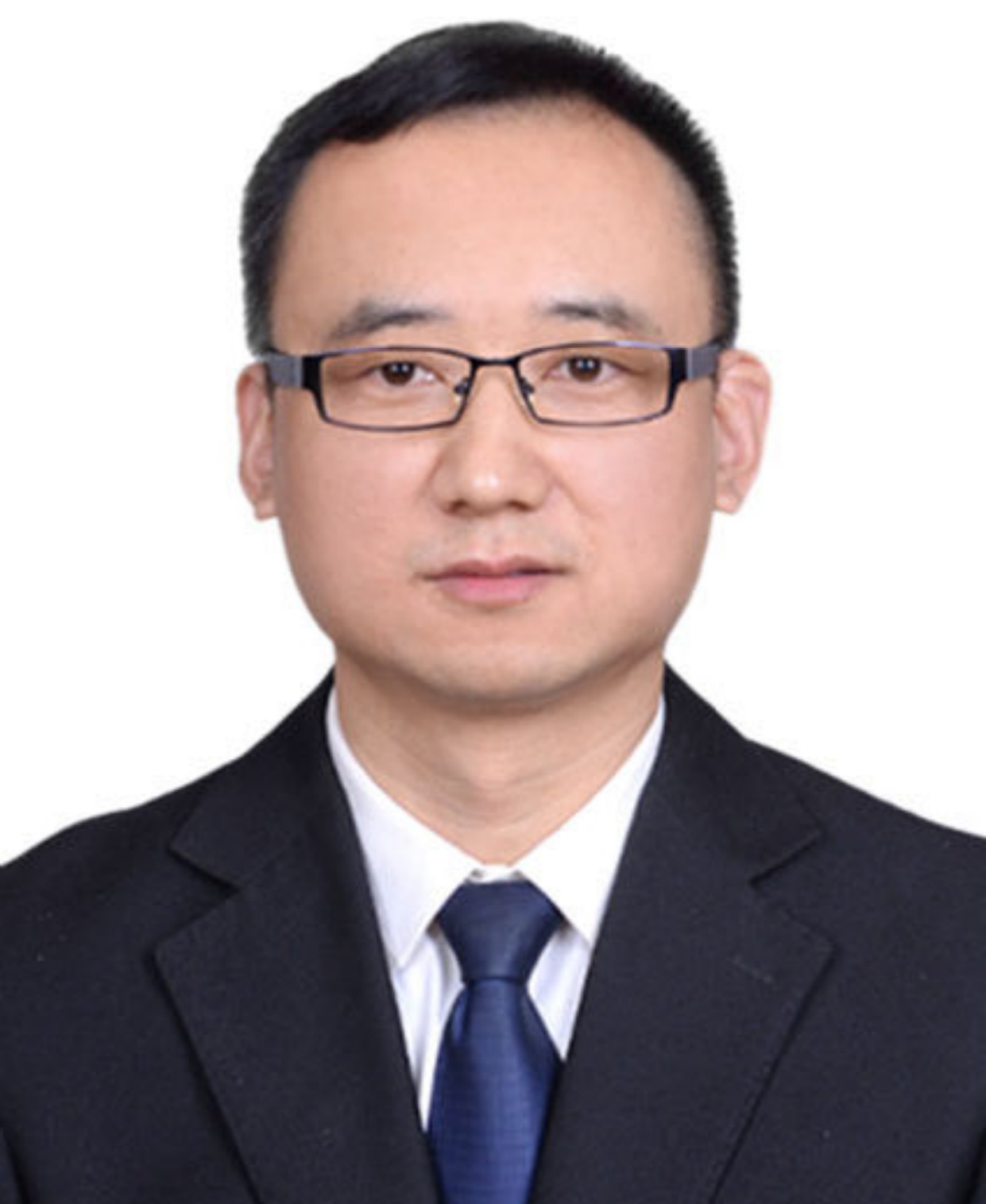}}]{Dongbin Jiao}
(Member, IEEE) received the B.S. degree in Information and Computing Sciences from Lanzhou University of Technology, Lanzhou, China, in 2008, the M.Eng. degree in Software Engineering from Dalian Maritime University, Dalian, China, in 2012, and the Ph.D. degree in Control Science and Engineering at Xi'an Jiaotong University, Xi'an, China, in 2019.  

Since March 2020, he has been a Lecturer with the School of Information Science and Engineering, Lanzhou University, China. His research interests include network resource optimization, UAV networks, intelligent unmanned autonomous systems, combinatorial optimization, evolutionary learning, and their application in ecology and environment protection. He is currently a reviewer for more than ten international journals, including TCOM, TTE, IoT-J, and serves as a TPC member for several international conferences, such as IEEE ICC and ICCC.
\end{IEEEbiography}
\begin{IEEEbiography}[{\includegraphics[width=1in,height=1.25in,clip,keepaspectratio]{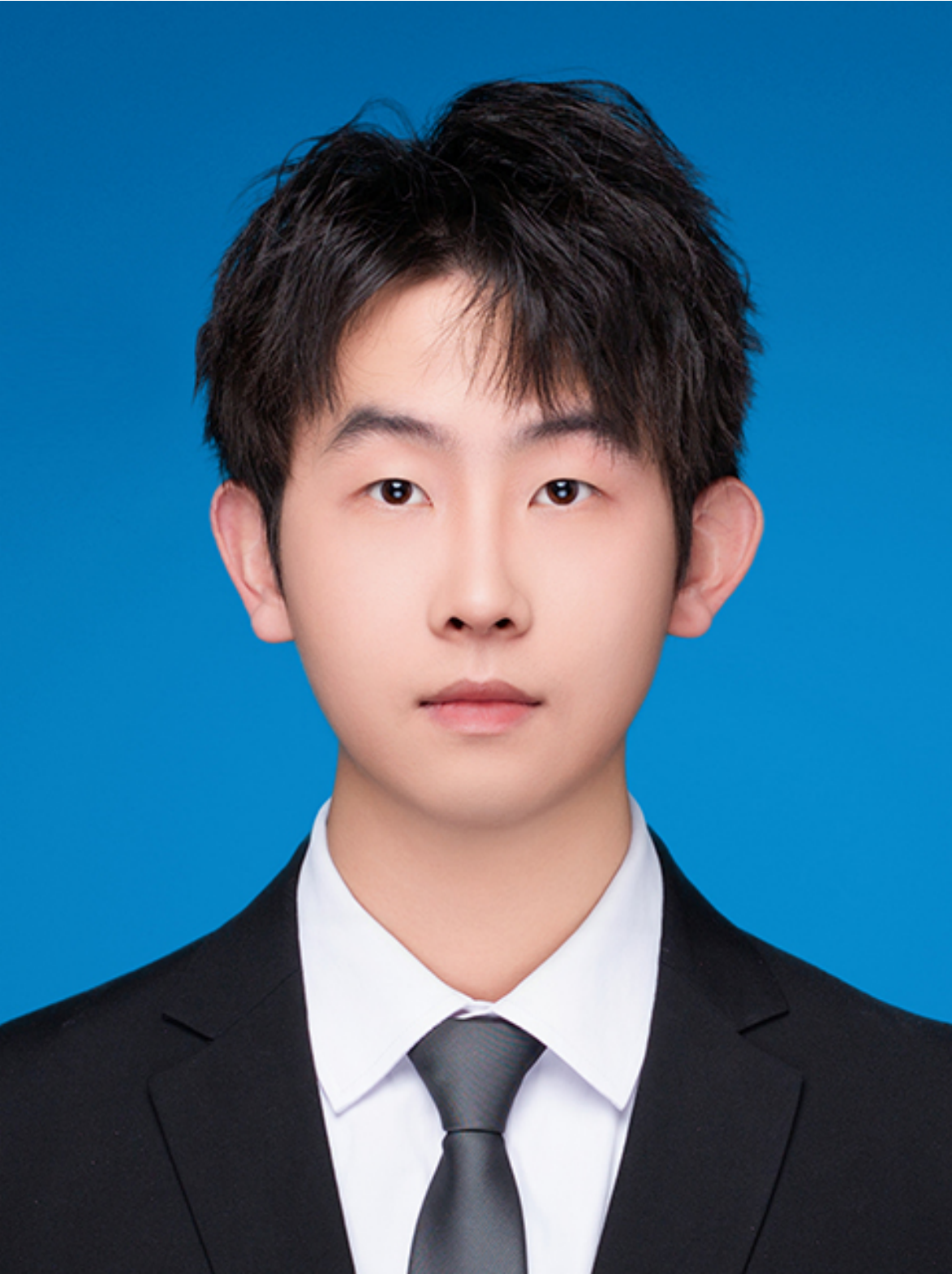}}]{Zihao Wang}
received the B.Eng. degree in Computer Science and Technology from the Lanzhou University, China, in 2024.

He is currently an engineer at Geoming AI. His research interests include combinatorial optimization and large language model.
\end{IEEEbiography}
\begin{IEEEbiography}[{\includegraphics[width=1in,height=1.25in,clip,keepaspectratio]{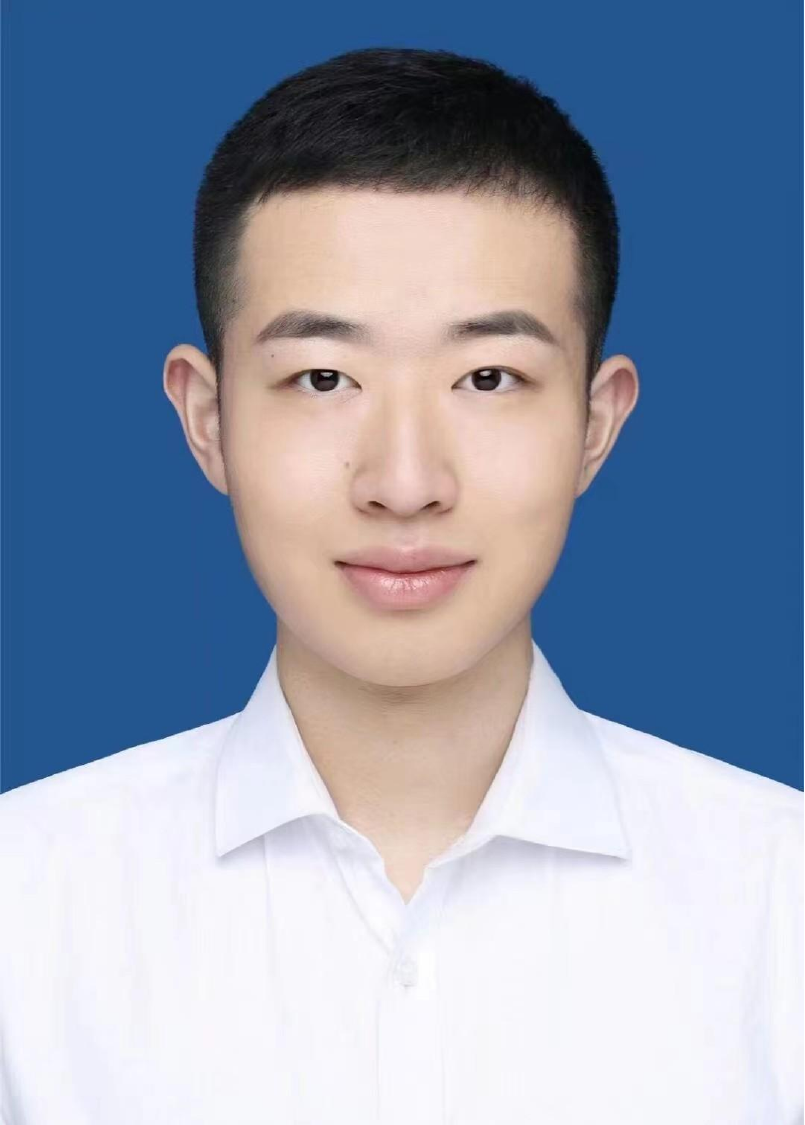}}]{Wen Fan}
received his B.Eng. degree in Computer Science and Technology from Sichuan University, Chengdu, China, in 2019 and the M.Eng. degree in the same major from Lanzhou University, Lanzhou, China, in 2024.

He is currently an engineer at Chery Automobile Co., Ltd. His research interests include IoT, UAV path planning, and reinforcement learning.
\end{IEEEbiography}
\begin{IEEEbiography}[{\includegraphics[width=1in,height=1.25in,clip,keepaspectratio]{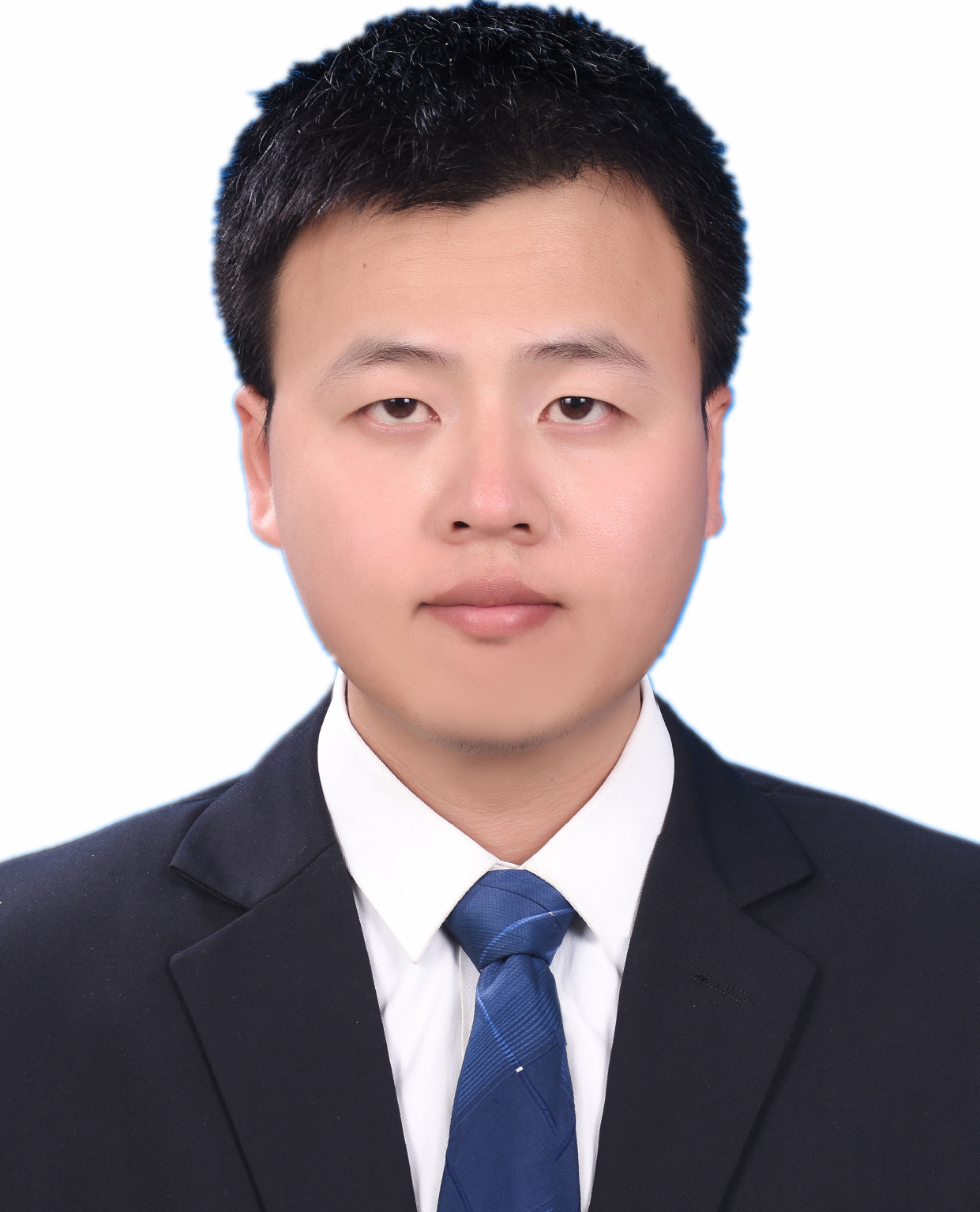}}]{Weibo Yang}
received the B.Eng. degree in Automation and the Ph.D. degree in Control Science and Engineering from Xi'an Jiaotong University, Xi'an, China, in 2013 and 2020, respectively.

Since March 2022, he has been a Lecturer with the School of Automobile, Chang'an University, China. His research interests include vehicle routing problem, integer programming, and evolutionary computation.
\end{IEEEbiography}
\begin{IEEEbiography}[{\includegraphics[width=1in,height=1.25in,clip,keepaspectratio]{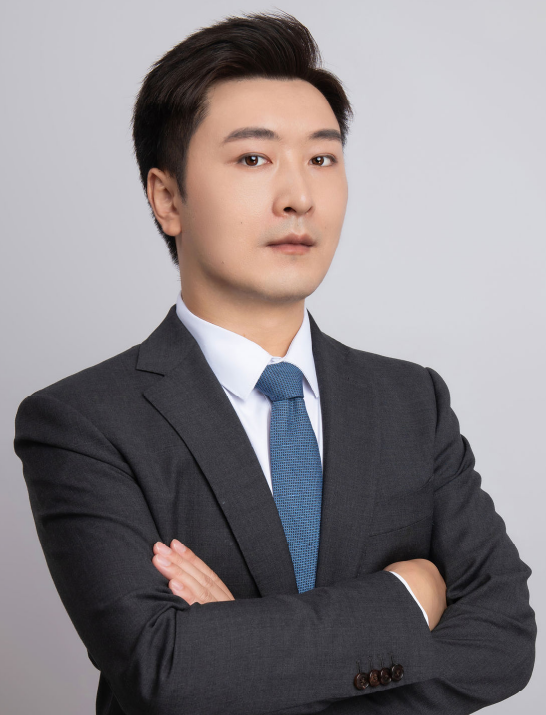}}]{Peng Yang}(Senior Member, IEEE)
is a tenure-track assistant professor jointly in the Department of Statistics and Data Science and the Department
of Computer Science and Engineering, Southern University of Science and Technology (SUSTech), China. He received his B.Sc. and Ph.D. degrees in the Department of Computer Science and Technology from University of Science and Technology of
China in 2012 and 2017, respectively. From 2017 to 2018, he was a Senior Engineer in Huawei and then he joined SUSTech. His research interests include Evolutionary Computation, Reinforcement Learning, and Multi-agent Systems.
He has published 37 papers in top journals and conferences. He has served as the reviewer for top-tier journals (TEVC, TNNLS, TIE) and the PC member of top conferences (NeurIPS, ICLR, and ICML). He is a member of IEEE Computational Intelligence Society (CIS) Evolutionary Computation Technical Committee.
\end{IEEEbiography}
\begin{IEEEbiography}[{\includegraphics[width=1in,height=1.25in,clip,keepaspectratio]{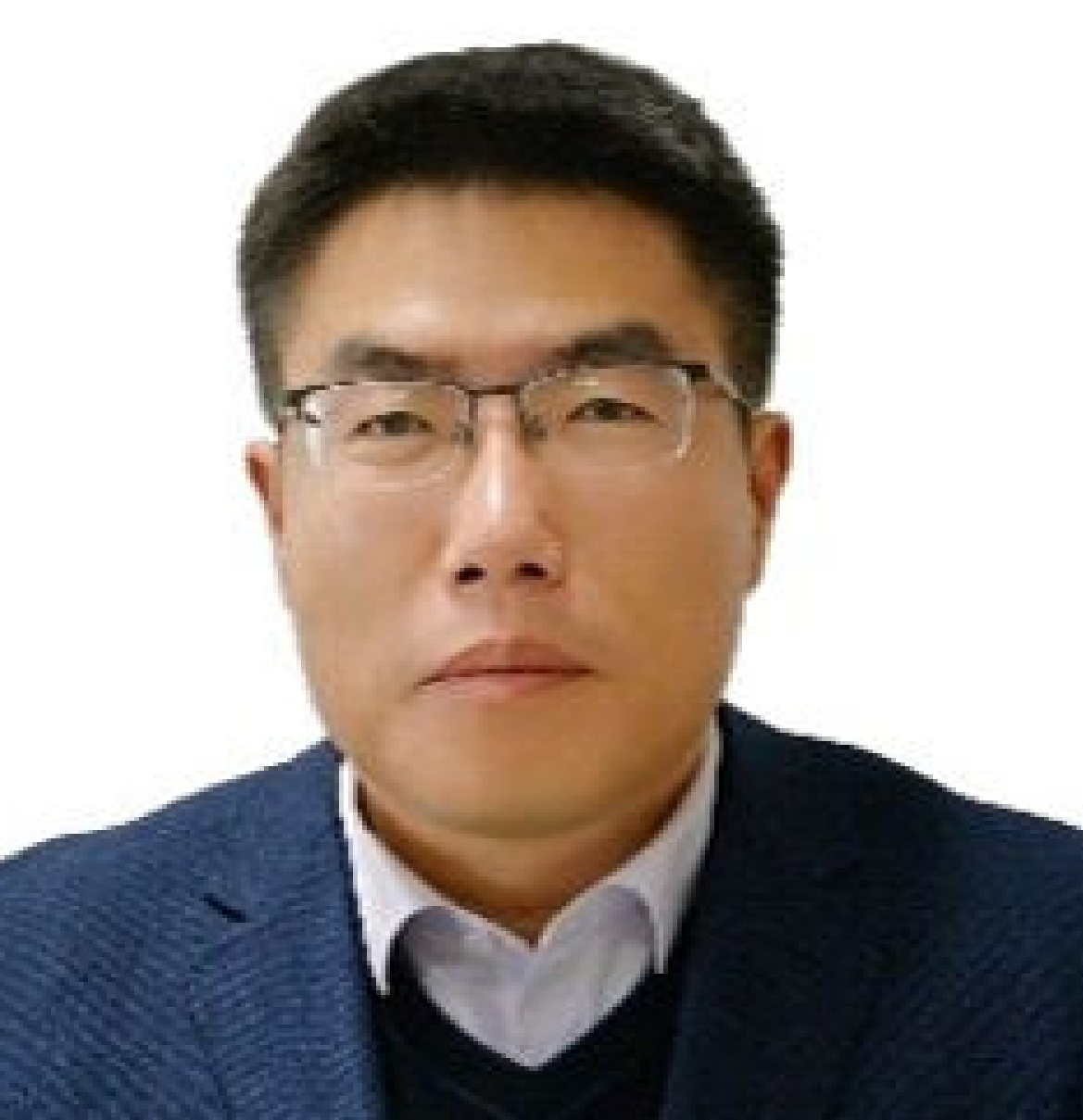}}]{Zhanhuan Shang} is a full professor at College of Ecology, Lanzhou University, China. His research interests include ecological restoration, grassland management, rural sustainable development, carbon management, and biodiversity on the Tibetan plateau. He has more than 200 publications in both international and Chinese journals and books.
\end{IEEEbiography}
\begin{IEEEbiography}[{\includegraphics[width=1in,height=1.25in,clip,keepaspectratio]{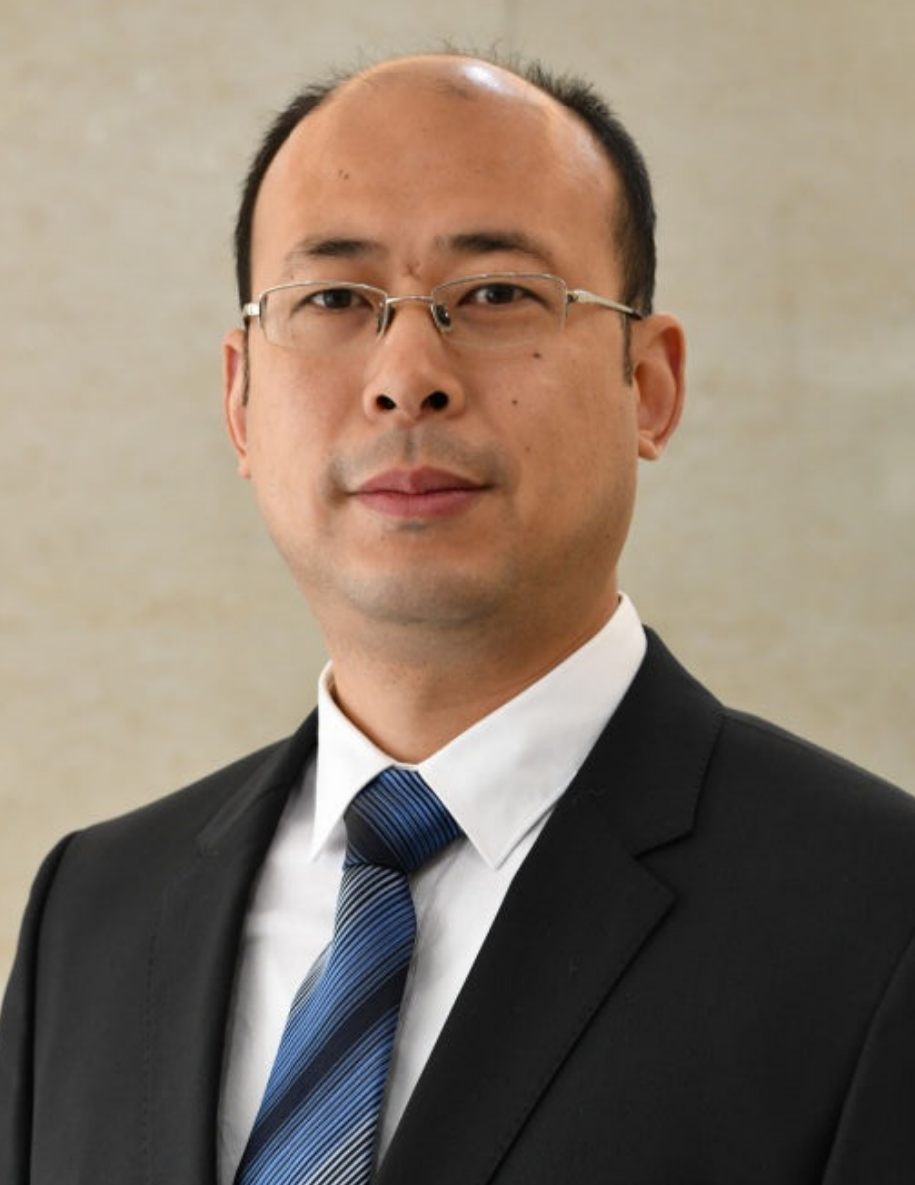}}]{Shi Yan}
(Member, IEEE) received the B.S. and M.Eng. degrees from Lanzhou University, Lanzhou, China, in 2001 and 2004, respectively, and the Dr.Eng. degree from Akita Prefectural University, Akita, Japan, in 2010.

From July 2004 to March 2007, he was an Assistant Professor with the School of Information Science and Engineering, Lanzhou University. He was a Visiting Researcher from January 2007 to March 2007 and a Visiting Research Fellow from April 2010 to September 2010 with the Department of Electronics and Information Systems, Akita Prefectural University. He was a Lecturer from April 2007 to April 2012 and an Associate Professor from May 2012 to November 2020. He is currently a Professor with the School of Information Science and Engineering, Lanzhou University. His research interests include uncertain dynamical systems, robotics, machine learning, and multidimensional system theory.
\end{IEEEbiography}

%




\end{document}